\def\ifinlinemath#1#2{#2}
\def\mathshift{$}
\def\myshift#1${{\def\ifinlinemath##1##2{##1}\raisebox{0ex}[0ex][0ex]{\mathshift#1\mathshift}}}
\def\myleftparen{(}
\def\myrightparen{)}
\newtheorem{ghost@theorem}{}[section]
\def\@maketheorem#1=#2;{
	\newtheorem{#1}[ghost@theorem]{#2}}
\def\maketheorem#1{
	\@for\@x:=#1\do{
		\expandafter\@maketheorem\@x;}}
\def\eqdef{\stackrel{\textsl{\tiny def}}{=}}
\def\N{\mathbb N}
\def\Z{\mathbb Z}
\def\R{\mathbb R}
\def\Q{\mathbb Q}
\def\abs#1{\left|#1\right|}
\def\rn#1{{\scshape #1}}
\theoremstyle{plain}
\newtheorem*{theorem*}{Theorem}
\newtheorem{claim-toplemma}{Claim}
\theoremstyle{definition}
\theoremstyle{remark}
\begin{document}

\title{On the Computing Power of $+$, $-$, and
$\times$}

\author{\IEEEauthorblockN{Marcello Mamino}
\thanks{\hskip-\parindent Date: 15$\,{\cdot}\,$IV$\,{\cdot}\,$2014}
\thanks{\hskip-\parindent%
The author has received funding from the European Research Council
under the European Community's Seventh Framework Programme (FP7/2007-2013
Grant Agreement no.~257039).}
\IEEEauthorblockA{%
Laboratoire d'Informatique de l'\'Ecole Polytechnique (\textsc{lix})\\
B\^atiment Alan Turing\\
1 rue Honor\'e d'Estienne d'Orves, Campus de l'\'Ecole Polytechnique, 91120 Palaiseau, France\\
Email: mamino@lix.polytechnique.fr}}

\maketitle
\thispagestyle{plain}
\pagestyle{plain}

\noindent\begin{abstract}\boldmath
Modify the Blum-Shub-Smale model of computation replacing the permitted
computational primitives (the real field operations) with any finite
set~$B$ of real functions semialgebraic over the rationals.
Consider the class of Boolean decision problems that
can be solved in polynomial time in the new model by machines with no machine constants.
How does this class depend on~$B$? We prove that it is always contained in
the class obtained for~$B=\{+,-,\times\}$.
Moreover, if $B$ is a set of
continuous semialgebraic functions containing~$+$
and~$-$, and such that
arbitrarily small numbers can be computed using~$B$, then we have the
following dichotomy: either our class is~$\mathsf P$ or it coincides with the
class obtained for~$B=\{+,-,\times\}$.
\end{abstract}

\def\BP{\operatorname{BP}}
\def\posslp{\text{\rm PosSLP}}
\section{Introduction}

\noindent
In this work, we study the power of computation over the real numbers to decide
classical Boolean problems. As opposed to discrete domains, there is
currently no universally accepted natural point of view on computation
over the reals, most of the existing models being roughly divided in two groups.
On the one hand, if we regard a computation over the reals as a process of
approximation to be carried out through discrete means, then we are into
the tradition of computable analysis and the bit model. On the other hand,
we can forgo some extent of realism, and consider theoretical machines
capable of directly manipulating real numbers with unbounded precision: in
this case, we are looking at models such as the Blum-Shub-Smale model and
real random access machines.
In the context of computational complexity,
adopting the second point of view means,
usually, to fix a finite basis of primitive operations that a machine can
perform on real numbers, and fixing some prescribed, often unitary, cost
for such operations: in short,
a rigorous form of counting flops.
In general, in these models, machines compute real functions of real inputs.
There is, however, a trend
to bring complexity in the Blum-Shub-Smale model, or its variants, back into
contact with classical discrete complexity, through the study of
Boolean parts: the Boolean part of a complexity class over the reals is
obtained by restricting the input and output of the corresponding machines
to Boolean values (the idea dates back to~\cite{Good94} and~\cite{Kor93},
the reader may find more information in \S22.2 of the book~\cite{BCSS},
which is also the reference for the Blum-Shub-Smale model, additional
bibliography can be found in~\cite{ABKM08}).
In this work, we will explore how the Boolean parts of
real complexity classes change by varying the set of primitive operations.
In particular, we are interested in machines performing various sets of
semialgebraic operations at unit cost.

One point of criticism to the Blum-Shub-Smale model
(raised, for example, in~\cite{Bra05}, \cite{BraCo06})
is that the only computable functions are piecewise rational.
In short: {\it why should $\sqrt{x}$ not be computable\/}?
Consider the Sum of Square Roots problem -- compare two sums of square roots
of positive integers -- which is important in computational
geometry due to ties with the Euclidean Travelling
Salesman Problem~\cite{GGJ76}. This problem is trivially solvable in
polynomial time by a real Turing machine with primitives~$+$, $-$,
and~$\sqrt{x}$ (we always assume to have equality and comparison tests), and, in fact, it can be solved in polynomial time also by
the usual real Turing machine (i.e.\ with primitives for rational
functions), but the result requires a clever argument~\cite{Tiwa92}.
Are we witnessing a coincidental fact, or is there a deeper relation between the ad-hoc
set of primitives~$\{+,-,\sqrt{x}\}$ and the one chosen by Blum, Shub, and
Smale~$\{+,-,\times,\div\}$?
As we will see, when we restrict our attention to discrete decision
problems and, say, polynomial time, adding
$\sqrt{x}$ to the
basic functions of the real Turing machine (or replacing $\times$
and~$\div$ with~$\sqrt{x}$) will not increase (or alter) its
computational power---or, more
precisely, the set of discrete decision problems that it can decide in
polynomial time. Hence, for Boolean problems,
the question we started with has an answer:
{\it the real Turing machine doesn't need
the primitive~$\sqrt{x}$, because it can simulate it}.

The study of complexity over arbitrary structures
has been initiated by Goode
in~\cite{Good94} and continued by many, see for instance Poizat's
book~\cite{Poizat} (also, in the context of recursion theory,
there has been previous work: see~\cite{Ersh81}, \cite{FrieMan92}). This line of
research focused mainly on questions such as~$\mathsf P = \mathsf{NP}$
{\it inside\/} different structures, or classes of
structures; i.e.\ considering equivalences or separations relativized to
various structures more or less in the same spirit as one relativizes to
various oracles.
Adding Boolean parts to the mix, we gain the ability to meaningfully
compare complexity
classes {\it across\/} structures.
Several problems that are complete for the Boolean part~$\BP(\mathsf P_{\R}^0)$
of the class of problems solvable in polynomial time by real Turing
machines without machine constants have been recently studied by
Allender, B\"urgisser, Kjeldgaard-Pedersen, and Miltersen~\cite{ABKM08}.
One of the $\BP(\mathsf P_{\R}^0)$-complete
problems identified
in~\cite{ABKM08}, called by them the Generic Task of Numerical Computation,
is offered as a prototype for problems that are hard for numerical, as
opposed to combinatorial, reasons; suggesting that the notion of~$\BP(\mathsf
P_{\R}^0)$-hardness may have practical value, to prove
intractability of numerical problems, much as $\mathsf{NP}$-hardness is
used for combinatorial problems. In fact, recent research adopts precisely
this point of view to assess the complexity of fixed point
problems~\cite{EtYa10}, and of semidefinite programming~\cite{TarVya07}.
We are therefore encouraged to investigate how the (analogue of) the
class~$\BP(\mathsf P_{\R}^0)$ changes when varying the computational basis,
both as a means to evaluate how generic the {\sc gtnc} really is, and as a way to
build up a toolbox of problems hard or complete for~$\BP(\mathsf
P_{\R}^0)$.

The aim of this paper is to prove that the computing power
of any finite set of real functions semialgebraic over~$\Q$
--
examples of which are the square root, a function computing
the real and imaginary parts of the roots of a seventh degree polynomial given by its coefficients, or the euclidean distance of two
ellipsoids in~$\R^3$ represented using, say, positive semidefinite
matrices
--
does not exceed the computing power of~$+$, $-$, and~$\times$.
We also prove, under reasonable
technical hypotheses, that a basis of functions
semialgebraic over~$\Q$ either solves in polynomial time precisely the
discrete problems in~$\BP(\mathsf P_{\R}^0)$, or precisely~$\mathsf P$.
For instance, to go back our little example,
the discrete problems
that the computational bases~$\{+,-,\sqrt{x}\}$, $\{+,-,\times,\div\}$,
and~$\{+,-,\times,\div,\sqrt{x}\}$
can solve in polynomial time are the same.

We will, now, spend a few words on the technical setting of our results.
Among the $\BP(\mathsf P_{\R}^0)$-complete problems identified
in~\cite{ABKM08} there is the problem $\posslp$: to decide whether a given circuit with
gates for $0,1,+,-,\times$ and no input gates represents a positive
number.
Clearly, the completeness of~$\posslp$ for~$\BP(\mathsf P_{\R}^0)$ can be
generalized to any basis~$B$ and the corresponding polynomial time class.
In other words, one can consider a
Boolean ($\subset\{0,1\}^\star$) language to be
efficiently decidable using~$B$, 
when it is decidable in polynomial time by a machine over the reals with basic
operations~$B$. Taking a different approach, one may say that a language is
efficiently decidable using~$B$ if it is polynomial time Turing reducible
to~$\posslp(B)$ -- i.e.\ $\posslp$ with gates in~$B$. These two points
of view are clearly equivalent mathematically, and, in fact, our work
can be phrased in either or both settings.
However, for the sake of clarity, we prefer to fix one and stick to it.
So, even though it may seem a less direct approach, we choose
the~$\posslp$ point of view, both because it allows finer grained
classifications -- we will state some intermediate result for
many-one instead of Turing reductions -- and
because, we believe, in total it makes the argument shorter.

For each finite set of real functions~$S$ semialgebraic over~$\Q$, we prove that $\posslp(S)$ is polynomial
time Turing reducible to~$\posslp$---this is a direct generalization of
a result in~\cite{ABKM08} proving $\BP(\mathsf P_{\R}^0)=\BP(\mathsf
P_{\R}^{\text{algebraic}})$, however we obtain our result
with different techniques, involving algebraic number theory and model
theory. Then, under the additional
hypothesis that all the functions in~$S$ are continuous, that $+$ and~$-$
are in~$S$, and that
arbitrarily small numbers can be represented by circuits with gates
in~$S$, we obtain the following dichotomy for the computational complexity
of~$\posslp(S)$. Either all the functions in~$S$ are piecewise linear, and
in this case $\posslp(S)$ is in~$\mathsf P$, or not, and in this case $\posslp(S)$
is polynomial time equivalent to~$\posslp$ (in the sense of Turing
reductions).

Finally, as a possible indication for future research,
we would like to raise the question of
machine constants (which are just $0$-ary primitives) and other
sets of primitives not semialgebraic over~$\Q$,
most importantly those that are commonly met in practice:
for instance, the typical pocket calculator functions $\sin(x)$,
$\log(x)$, $e^x$, \&c.\ (part of the arguments in this work apply to all
functions definable in an o-minimal structure over~$\R$, on the other
hand the unrestricted $\sin$~function combined with algebraic operations
easily gives rise to a problem hard for $\#\mathsf P$ via results
on~BitSLP in~\cite{ABKM08}).
Is it possible to show equivalence or separation
results involving transcendental functions?

\section{Preliminaries \& Notations}

\def\size#1{\left\|#1\right\|}
\noindent We will consider circuits whose gates operate on real
numbers (real circuits, for short).
Our circuits will have any number of input gates and precisely one
output gate, hence, for us, circuits compute multivariate real functions.
If a circuit has no input gates, we will call it a {\it closed\/} circuit:
closed circuits represent a well defined real value.
We measure the size of a circuit by the number of its gates. The depth of
a gate is the length of the longest directed path leading to it.
A {\it basis\/} is a finite set of real functions, which we intend to
use as gates. Given a basis~$B$, a $B$-circuit is a circuit with gates belonging to~$B$,
and $V(B)$ is the set of the values of all closed $B$-circuits.
We will consistently employ the same symbol to denote a circuit and the
function it represents. We will identify algebraic formul\ae\ with
tree-like circuits. The notation $\size{\,\cdot\,}$ denotes
the circuit size, while $\abs{\,\cdot\,}$ is the absolute value.

Broadly speaking,
we are interested in the efficient evaluation of the sign of closed
circuits in some basis~$B$, by means of an oracle for the evaluation of the
sign of closed circuits in some other basis~$B'$. In general, we will
employ the technique, common in computational geometry, of combining
approximate evaluation with explicit zero bounds: see~\cite{LPY05} for a
survey.

\begin{definition}[zero bound]
Let~$\mathcal{C}$ be a class of closed real circuits. We say that
$Z\colon\mathcal{C}\to\R^{>0}$ is a zero bound for~$\mathcal{C}$ if, for
all $c\in\mathcal{C}$, either $c$ evaluates to zero ($c=0$), or
$Z(c)<\abs{c}$.
\end{definition}

It is clear that, given a zero bound~$Z$ for~$\mathcal{C}$, we can decide
the sign of a circuit~$c\in\mathcal{C}$ by looking at an
approximation~$c'$ of~$c$ up to an additive error bounded by~$Z(c)/2$. In
fact, if $\abs{c'}\le Z(c)/2$, then $c=0$, otherwise $c$ and~$c'$ have the
same sign. Both directions of our argument will follow this general
recipe. We will now summarize a few facts about semialgebraic sets and Weil heights,
that we need in order to provide the ingredients.

A subset of~$\R^n$ is semialgebraic over a subring~$A$ of~$\R$ if it can
be described by a finite Boolean combination of subsets of~$\R^n$
defined by polynomial equalities~$P_i(x_1\dotsc x_n)=0$ or
inequalities~$Q_j(x_1\dotsc x_n)>0$,
with~$P_i,Q_j\in A[x_1\dotsc x_n]$.
A function is said to be semialgebraic over~$A$ if its graph, as a set, is
semialgebraic over~$A$.
As a general reference for the reader, we suggest the book of Van Den
Dries~\cite{VanDenDries}.
In this paper, we are mainly interested in functions and sets
semialgebraic over~$\Q$. Let us recall the central property of
semialgebraic sets.

Semialgebraic sets enjoy a number of properties collectively defined as
\textit{tame topology}.
Of them, it may be useful to remind that
semialgebraic sets have finitely many connected components, and
semialgebraic functions are almost everywhere infinitely differentiable.
In particular, our zero bound will be based on the following fact
(see~\cite[Chapter~2(3.7)]{VanDenDries}).

\begin{fact}\label{th-poly-bound}
If $g\colon\R\to\R$ is semialgebraic (over~$\R$), then there are $d\in\N$
and~$M>0$ such that $\abs{g(x)}\le x^d$ for all~$x>M$.
\end{fact}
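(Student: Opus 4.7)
The plan is to reduce the claim to the growth of roots of a polynomial whose coefficients are themselves polynomials in~$x$. First I would observe that the graph $\Gamma=\{(x,g(x)):x\in\R\}\subset\R^2$ is a semialgebraic set of dimension at most one, since it is the image of~$\R$ under the semialgebraic map $x\mapsto(x,g(x))$. By the standard tame-topology fact that the Zariski closure of a semialgebraic set has the same dimension as the set itself, $\Gamma$ is therefore not Zariski-dense in~$\R^2$, so there exists a nonzero polynomial $P(x,y)\in\R[x,y]$ vanishing on all of~$\Gamma$; equivalently, $P(x,g(x))=0$ for every $x\in\R$.

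Next, I would write $P(x,y)=\sum_{i=0}^{n}a_i(x)\,y^i$ with leading coefficient $a_n\in\R[x]\setminus\{0\}$, and choose $M_0>0$ large enough that $a_n(x)\ne 0$ whenever $x>M_0$. For such~$x$ the value $g(x)$ is a root of the univariate polynomial $P(x,\wildcard)$, so Cauchy's classical bound on the moduli of polynomial roots yields
\[\abs{g(x)}\le 1+\max_{0\le i<n}\left|\frac{a_i(x)}{a_n(x)}\right|.\]
Each quotient $a_i/a_n$ is a rational function in~$x$, hence its absolute value is dominated by a fixed power of~$x$ for~$x$ large enough. Taking the maximum over the finitely many~$i$ and incrementing the exponent by one then produces the desired $d\in\N$ and threshold $M\ge M_0$ such that $\abs{g(x)}\le x^d$ for all $x>M$.

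The one step I expect to require care is the existence of the annihilating polynomial~$P$: one must explicitly invoke that a semialgebraic subset of~$\R^2$ of dimension strictly less than two lies in an algebraic curve (equivalently, that its Zariski closure is a proper algebraic subset of the plane). This is entirely standard, following from the agreement of semialgebraic and Zariski dimensions via cell decomposition, but it is the one piece of real-algebraic geometry that the argument genuinely needs; once~$P$ is in hand, the rest reduces to the classical Cauchy root bound and elementary growth estimates for rational functions.
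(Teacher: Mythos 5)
Your argument is correct: the graph, being a one‑dimensional semialgebraic set, is contained in a proper algebraic curve $\{P=0\}$, and then the Cauchy root bound applied to $P(x,\wildcard)$ for $x$ past the last zero of the leading coefficient gives the polynomial bound. This is essentially the same route as the cited reference (van den Dries, Ch.~2), which the paper invokes without repeating the proof.
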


In a sense, semialgebraic objects are easier to construct than it might seem
at first sight, because of the following quantifier elimination theorem.

\begin{fact}[Tarski-Seidenberg]
Let $\phi$ be a first-order formula in the field
language~$(0,1,+,-,\times)$, then there is a quantifier free
formula~$\psi$ in the same language such that
\[
	\R\models\phi\equiv\psi
\]
\end{fact}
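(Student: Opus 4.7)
The plan is to proceed by induction on the logical complexity of $\phi$. Since every formula is built from atoms using $\vee$, $\neg$, $\exists$, and since negation commutes with replacing a subformula by a quantifier-free equivalent, the whole task reduces to eliminating a single existential: given a quantifier-free $\phi(x,\bar{y})$, I need to find a quantifier-free $\psi(\bar{y})$ equivalent over $\R$ to $\exists x.\phi(x,\bar{y})$. Putting $\phi$ in disjunctive normal form and pushing $\exists$ through the disjunction, I may further assume $\phi$ is a conjunction of sign conditions $p_i(x,\bar{y}) \star_i 0$, with $\star_i \in \{=, \ne, >, <\}$ and $p_i \in \Z[x,\bar{y}]$ (reading the order as definable via squares, since the fact is implicitly about the structure $\R$).

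The algebraic heart of the matter will be the following: for any finite family $p_1(x),\ldots,p_k(x) \in \R[x]$, the real line decomposes into finitely many intervals on each of which every $p_i$ has constant sign, and this \emph{sign pattern} is determined by the signs of a finite list of auxiliary polynomials in the coefficients of the~$p_i$. I would make this effective via Sturm's theorem: the number of real roots of a univariate polynomial in an interval is a sign-change count in its Sturm sequence, whose members are themselves polynomials in the coefficients (and in the interval endpoints); iterating over the roots of all products $p_i p_j$ pins down the common refinement of the sign patterns. Consequently $\exists x.\bigwedge_i p_i(x,\bar{y})\star_i 0$ unfolds into a finite disjunction, indexed by the admissible sign patterns, of Boolean combinations of polynomial sign conditions on $\bar{y}$, which is the sought quantifier-free formula $\psi(\bar{y})$.

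The main obstacle will be that this analysis presupposes the degree in~$x$ of each $p_i(x,\bar{y})$ is constant as $\bar{y}$ varies, whereas the leading coefficients are polynomials in $\bar{y}$ which can vanish on a lower-dimensional semialgebraic locus. I will handle this by a preliminary finite case split: for each of the finitely many patterns prescribing which leading coefficients vanish (itself a quantifier-free condition on $\bar{y}$), one reduces to polynomials of a strictly smaller true degree and recurses. The recursion terminates on the degrees, and the final $\psi$ is the disjunction over cases of the formulas produced on each branch. An alternative, slicker route would be the model-theoretic one: derive quantifier elimination from the Artin--Schreier uniqueness of the real closure of an ordered field, via the standard substructure-completeness criterion; this trades the concrete Sturm bookkeeping for the theory of real closures.
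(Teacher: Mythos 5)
The paper states this Fact without proof, as a classical background result, so there is no internal proof to compare against; what follows is an assessment of your sketch on its own terms. Your outline is a correct rendition of the Tarski/Cohen/H\"ormander-style argument: reduce to eliminating a single existential from a conjunction of sign conditions, make the sign-pattern analysis effective via Sturm sequences whose entries are (pseudo-)remainders polynomial in the coefficients, and handle degree degeneration by a finite case split on which leading coefficients vanish, recursing on degree. The alternative you mention---substructure completeness derived from the Artin--Schreier uniqueness of real closures---is the Robinson-style model-theoretic route; either is standard and correct.

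One point deserves more care than your parenthetical gives it. Quantifier elimination for $\R$ fails in the bare field language $(0,1,+,-,\times)$: the formula $\exists x\,(x^2 = y)$ defines $\{y : y \ge 0\}$, an infinite and co-infinite set, which cannot be a Boolean combination of zero sets of polynomials in $y$. Saying the order is ``definable via squares'' gives only an existential definition of $\ge$, which is of no help when the goal is a quantifier-free $\psi$. The Fact should be read over the ordered-field language with $<$ taken as primitive---this is also what the paper implicitly needs, since it invokes the result to identify first-order definable sets with semialgebraic sets, and the latter are cut out by inequalities. Once $<$ is in the language, your sign conditions $\{=,\ne,>,<\}$ are legitimate atoms and the rest of the argument goes through as you describe.
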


In other words, a set is first-order definable over~$A$ in the real field,
if and only if it is semialgebraic over~$A$.
Hence, for instance, we can use constructions involving $\sup$ or~$\inf$, and
classical $\epsilon\delta$~definitions.

The Tarski-Seidenberg theorem is effective,
i.e.\ a computable procedure to obtain $\psi$ from~$\phi$
does exist, however no fast procedure is known. For our
application, the mere existence of~$\psi$ will suffice. In fact, in the
algorithms that we are going to describe, the quantifier elimination
theorem is going to be applied to finitely many formul\ae~$\phi_i$ which
are known a priori, and, in this situation, the corresponding~$\psi_i$ can
be simply hard-coded into the algorithm.

Our second ingredient is the absolute Weil height,
which was introduced by Andr\'e Weil in the context of Diophantine
geometry.
For our purpose, absolute heights are real numbers associated to points
in~$\mathbb{P}^n(\Q^{\text{alg}})$. The absolute height~$H(p)$
of~$p\in\mathbb{P}^n(\Q^{\text{alg}})$ is a positive real number meant to
represent a notion of size of~$p$. For instance, if $p$ is
in~$\mathbb{P}^n(\Q)$, then its absolute height can be determined as follows:
take a tuple~$q$ of $n+1$ coprime integers representing~$p$, then
$H(p)=\max_iq_i$. For the general definition, which is too technical for
this introduction, we refer the reader
to~\cite[Chapter~3]{Lang}.
We will summarize below the facts that we need.
The absolute height~$H(x)$ of an algebraic
number~$x$ is defined as the height
of~$(1,x)\in\mathbb{P}^1(\Q^{\text{alg}})$.
The following facts will be used to bound the result of algebraic computations.

\begin{fact}[{\cite[Property~3.3]{Waldschmidt}}]\label{th-heights-arith}
Let $a$ and~$b$ be algebraic numbers, then
\begin{align*}
H(ab) &\le H(a)H(b)
& H(a\pm b) &\le2H(a)H(b)
\end{align*}
\end{fact}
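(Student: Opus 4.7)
The plan is to evaluate $H$ place by place using the standard product formula. Fix a number field $K$ containing both $a$ and $b$, let $M_K$ denote its set of places, normalize the absolute values $\abs{\,\cdot\,}_v$ so that the product formula holds, and write $d_v = [K_v : \Q_v]$ for the local degrees. Then
\[
	H(x) = \prod_{v \in M_K} \max(1, \abs{x}_v)^{d_v / [K : \Q]}
\]
for every $x \in K$, and this expression is independent of the choice of $K$, so it extends $H$ unambiguously to all of $\Q^{\text{alg}}$.

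For the product inequality, multiplicativity of each local absolute value gives $\abs{ab}_v = \abs{a}_v \abs{b}_v$, from which a trivial case analysis (depending on whether $\abs{a}_v$ and $\abs{b}_v$ are each $\le 1$ or $>1$) yields $\max(1, \abs{ab}_v) \le \max(1, \abs{a}_v)\max(1, \abs{b}_v)$ at every place. Taking the weighted product over $v \in M_K$ immediately delivers $H(ab) \le H(a) H(b)$.

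For the sum inequality, I would bound $\abs{a \pm b}_v$ separately at the two kinds of places. At every non-archimedean $v$, the ultrametric inequality gives $\abs{a \pm b}_v \le \max(\abs{a}_v, \abs{b}_v) \le \max(1, \abs{a}_v)\max(1, \abs{b}_v)$, so no extra factor appears. At the archimedean places, the ordinary triangle inequality only yields $\abs{a \pm b}_v \le 2\max(\abs{a}_v, \abs{b}_v)$, forcing an additional local factor of $2^{d_v / [K : \Q]}$. Multiplying over all places,
\[
	H(a \pm b) \le H(a) H(b) \prod_{v \mid \infty} 2^{d_v / [K : \Q]} = 2 H(a) H(b),
\]
the last equality using the well-known identity $\sum_{v \mid \infty} d_v = [K : \Q]$.

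The only delicate point is pinning down the normalization of each $\abs{\,\cdot\,}_v$ so that both the product formula $\prod_v \abs{x}_v^{d_v} = 1$ on $K^*$ holds and $H$ is independent of the choice of $K$; once that bookkeeping is settled, both inequalities reduce to the elementary per-place estimates above, and the sign $\pm$ in the second inequality is immaterial since $\abs{-b}_v = \abs{b}_v$ at every place.
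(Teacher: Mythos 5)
The paper itself does not prove this statement; it is stated as a \emph{Fact} with a citation to Waldschmidt's \emph{Diophantine Approximation on Linear Algebraic Groups}, Property~3.3, so there is no proof in the paper to compare against. Your proposal is the standard self-contained argument: decompose the absolute Weil height into its local factors $\max(1,\abs{x}_v)^{d_v/[K:\Q]}$ over the places of a number field $K$ containing $a$ and $b$, handle multiplicativity place by place, and handle addition by splitting into non-archimedean places (where the ultrametric inequality costs nothing) and archimedean places (where the triangle inequality costs a factor $2^{d_v/[K:\Q]}$, and $\sum_{v\mid\infty} d_v = [K:\Q]$ assembles these into a single global factor of~$2$). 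This is correct, and it is essentially the proof given in the cited reference and in standard texts such as Bombieri--Gubler or Hindry--Silverman; the per-place case analysis for $\max(1,\abs{a}_v\abs{b}_v)\le\max(1,\abs{a}_v)\max(1,\abs{b}_v)$ and the normalization caveat you flag at the end are exactly the right things to spell out if one wanted to make the sketch fully rigorous.
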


\begin{fact}\label{th-heights-alg}
Let
\[p(x) = a_0 + a_1 x + \dotsb + a_d x^d\]
be a polynomial with algebraic coefficients. Let $\zeta$ be a root
of~$p$. Then
\[
H(\zeta) \le 2^d \prod_i H(a_i)
\]
\end{fact}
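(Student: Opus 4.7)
The plan is to use the adelic definition of the absolute Weil height. Fix a number field $K$ containing $\zeta$ and every $a_i$, and write $H(\alpha) = \prod_{v \in M_K} \max(1, |\alpha|_v)^{n_v}$ with $n_v = [K_v : \Q_v]/[K : \Q]$, so that $\sum_{v \mid \infty} n_v = 1$ and the product formula $\prod_v |\alpha|_v^{n_v} = 1$ holds for every $\alpha \ne 0$. The statement is trivial when $\zeta = 0$, and I may assume $a_d \ne 0$, since discarding trailing zero coefficients only lowers $d$ and weakens the target bound.

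The crux is a local estimate, established separately at each place $v$:
\[
|a_d|_v \cdot \max(1, |\zeta|_v) \;\le\; C_v \cdot \max_i |a_i|_v,
\]
where $C_v = 1$ for non-archimedean $v$ and $C_v = d$ for archimedean $v$. When $|\zeta|_v \le 1$ the inequality collapses to $|a_d|_v \le \max_i |a_i|_v$, which is immediate. When $|\zeta|_v > 1$, I start from $a_d \zeta^d = -\sum_{i<d} a_i \zeta^i$: the non-archimedean case follows from the ultrametric inequality, while the archimedean case uses the ordinary triangle inequality together with $|\zeta|_v^i \le |\zeta|_v^{d-1}$. In either case, dividing by $|\zeta|_v^{d-1}$ yields the claim.

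Taking the product of the local inequalities with weights $n_v$, the factors $|a_d|_v^{n_v}$ cancel to $1$ by the product formula, so the left-hand side becomes $H(\zeta)$. The constants contribute $\prod_v C_v^{n_v} = d^{\sum_{v \mid \infty} n_v} = d$, and the remaining term $\prod_v (\max_i |a_i|_v)^{n_v}$ is the projective height of $(a_0, \ldots, a_d)$. The pointwise inequality $\max_i |a_i|_v \le \prod_i \max(1, |a_i|_v)$ then bounds that projective height by $\prod_i H(a_i)$. Assembling everything gives $H(\zeta) \le d \prod_i H(a_i)$, which is stronger than the claimed $2^d \prod_i H(a_i)$ since $d \le 2^d$.

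The only non-routine ingredient is the archimedean local estimate, which is essentially Cauchy's classical bound on the modulus of a polynomial root; its constant is polynomial in $d$ and is easily absorbed into $2^d$. Everything else is bookkeeping with the product formula and standard pointwise bounds on Weil heights.
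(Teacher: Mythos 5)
Your proof is correct, and it takes a genuinely different route from the paper. The paper's own proof is a one-line citation: it invokes Silverman's Theorem VIII.5.9, which relates the projective height $H([a_0:\dots:a_d])$ to the product $\prod_j H(\alpha_j)$ over \emph{all} roots $\alpha_j$ (with a $2^{d}$-type constant arising from the binomial coefficients in the expansion of $a_d\prod(x-\alpha_j)$), and then observes $H([a_0:\dots:a_d])\le\prod_i H(a_i)$, after which one discards $H(\alpha_j)\ge1$ for $\alpha_j\neq\zeta$. You instead work directly with the single root $\zeta$, never factoring the polynomial: you prove the local estimate $|a_d|_v\max(1,|\zeta|_v)\le C_v\max_i|a_i|_v$ at each place (a Cauchy-type bound, with $C_v=1$ non-archimedean and $C_v=d$ archimedean), take the weighted product, and let the product formula eliminate $|a_d|_v$. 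This is more self-contained — it needs no all-roots machinery and no elementary symmetric functions — and it yields the sharper constant $d$ in place of $2^d$. The Silverman route, by contrast, packages more information (a two-sided comparison of $\prod_j H(\alpha_j)$ with the coefficient height), which the fact at hand doesn't need. One small remark: your parenthetical appeal to ``Cauchy's classical bound'' at the end is unnecessary, since the paragraph preceding it already proves the archimedean estimate with the explicit constant $d$; and, as with the paper's statement itself, the identically zero polynomial must be tacitly excluded, but your reduction to $a_d\ne0$ handles every other case.
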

\begin{proof}
Follows from~\cite[Chapter~\rn{viii} Theorem~5.9]{Silverman} observing
that $H([a_0\dotsc a_d])\le\prod_i H(a_i)$.
\end{proof}

\begin{fact}\label{th-heights-obv}
Let $\alpha\neq 0$ be an algebraic number of degree~$d$. Then
\[
H(\alpha)^{-d} \le \abs{\alpha} \le H(\alpha)^d
\]
\end{fact}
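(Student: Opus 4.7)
The plan is to invoke the classical formula expressing the absolute height of an algebraic number as the $d$-th root of the Mahler measure of its minimal polynomial. Specifically, if $p(x) = a_d x^d + \dotsb + a_0 \in \Z[x]$ is the minimal polynomial of $\alpha$ (with coprime integer coefficients) and $\alpha = \alpha_1, \dotsc, \alpha_d$ are its conjugates in~$\Q^{\text{alg}}$, then
\[
	H(\alpha)^d = \abs{a_d} \prod_{i=1}^d \max(1,\abs{\alpha_i}),
\]
which is a standard consequence of the definition of $H$ via the product formula over the places of a number field (see~\cite[Chapter~3]{Lang}).

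From this identity the upper bound is immediate: $\abs{a_d}$ is a nonzero integer and is therefore at least~$1$, and $\alpha$ itself occurs among the $\alpha_i$, so the right-hand side is at least $\max(1,\abs{\alpha}) \ge \abs{\alpha}$.

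For the lower bound I would exploit the symmetry $H(\alpha^{-1}) = H(\alpha)$, which follows at once from the projective definition of the height: the coordinates $(1:\alpha)$ and $(\alpha^{-1}:1)$ represent the same point of~$\mathbb{P}^1(\Q^{\text{alg}})$ (recall $\alpha \neq 0$), and $H$ depends only on the projective class. Since $\alpha^{-1}$ is algebraic of the same degree~$d$ as $\alpha$, applying the already-proven upper bound to $\alpha^{-1}$ yields $\abs{\alpha^{-1}} \le H(\alpha^{-1})^d = H(\alpha)^d$, i.e.\ $\abs{\alpha} \ge H(\alpha)^{-d}$. I do not expect any substantive obstacle: the only non-elementary ingredient is the Mahler-measure identity above, and that is entirely classical.
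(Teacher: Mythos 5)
Your proof is correct, and since the paper's own proof is just the one-liner ``Immediate from the definition,'' you are essentially spelling out that same definition-chase: the Mahler-measure identity is one standard way to write $H(\alpha)^d$, and the two key observations (a single factor $\max(1,\abs{\alpha})$ already dominates $\abs{\alpha}$, and $H(\alpha)=H(\alpha^{-1})$ by projective invariance plus coordinate symmetry) are exactly what ``immediate'' is gesturing at. The argument is sound; the only nitpick is that passing from $H\bigl((\alpha^{-1}:1)\bigr)$ to $H(\alpha^{-1})=H\bigl((1:\alpha^{-1})\bigr)$ also silently uses that the height is symmetric in the projective coordinates, which is trivially true but worth flagging.
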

\begin{proof}
Immediate from the definition.
\end{proof}

\section{Statement of the Results} \label{sect-statement}

\noindent Now we state our main results. A brief discussion of the
hypothesis of Theorem~\ref{th-main-b}, as well as a third result which may
be of interest in certain cases, can be found in
Section~\ref{sect-addenda}. The next three sections will be devoted to
proving Theorem~\ref{th-main-a} and Theorem~\ref{th-main-b}.

\begin{definition}[$\posslp$]
Let $B=\{f_1\dotsc f_m\}$ be a finite set of functions
$f_i\colon\R^{n_i}\to\R$. The decision problem
$\posslp(B)$ is defined as follows.\\
{\bf Input}: a closed $B$-circuit~$c$\\
{\bf Output}: YES if $c>0$, NO otherwise\\
Keeping the same notation as~\cite{ABKM08}, we will
denote $\posslp(0,1,+,-,\times)$ simply as~$\posslp$.
\end{definition}

\begin{theorem}\label{th-main-a}
Let $B$ be a finite set of real functions semialgebraic over~$\Q$.
Then $\posslp(B)$ is polynomial time Turing reducible
to~$\posslp$.
\end{theorem}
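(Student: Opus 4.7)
The plan is to use the zero-bound-plus-approximation paradigm the preliminaries set up, with Weil heights providing the bound and $\posslp$-computed SLP rationals providing the approximation. Write~$n$ for the size of the input $B$-circuit~$c$.

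For the zero bound, I show by induction on gates that the value of any closed $B$-subcircuit is algebraic of degree at most~$2^{O(n)}$ and Weil height at most~$2^{2^{O(n)}}$. At a gate $g = f(x_1,\dotsc,x_k)$ with $f\in B$, the value~$g$ lies on the graph of~$f$, hence (since~$f$ is semialgebraic over~$\Q$) it is a root of one of finitely many fixed polynomials $P_f(y, x_1,\dotsc,x_k) \in \Q[y, x_1,\dotsc,x_k]$ coming from a cell decomposition of the graph, with the correct cell selected by a fixed Boolean combination of polynomial sign conditions on the~$x_i$. Applying Fact~\ref{th-heights-alg} to such a $P_f$ viewed as a polynomial in~$y$, and Fact~\ref{th-heights-arith} to bound the heights of its coefficients in terms of $H(x_1),\dotsc,H(x_k)$, the height of the output multiplies by a factor polynomial in the input heights at each gate, with constants depending only on~$B$; the degree bound is analogous. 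Iterating over the $n$ gates and invoking Fact~\ref{th-heights-obv} yields $|c| \ge H(c)^{-\deg c} \ge 2^{-2^{O(n)}}$ unless $c=0$.

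For the approximation, I inductively build, for each gate~$g$, SLPs $p_g, q_g$ over $\{0,1,+,-,\times\}$ and a precision target $N_g$ (written in binary with polynomially many bits) such that $|g - p_g/q_g| \le 2^{-N_g}$, with the~$N_g$ chosen large enough to absorb error propagation up to the root. For $+,-,\times$ gates, if present in~$B$, this is routine rational arithmetic on SLPs, using the height bounds to control the amplification of absolute errors. For a semialgebraic gate, $g$ is a distinguished root of $P_f(y, x_1,\dotsc,x_k)$, and I refine an approximation by Newton iteration on the rational polynomial $P_f(y, p_1/q_1,\dotsc,p_k/q_k) \in \Q[y]$. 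Each Newton step doubles the number of correct bits and grows the SLPs by a constant factor, so $O(n)$ iterations suffice to reach the doubly-exponential target precision. A single final $\posslp$ query then compares $p_c$ against $\pm q_c \cdot 2^{-N_c}$ and outputs the sign of~$c$.

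The main obstacle is the semialgebraic inductive step. Two subtleties: (i)~to make Newton converge quadratically I first need to land in a quadratic basin of attraction around the correct root of the approximate polynomial, which requires a short initial bisection phase guided by the semialgebraic sign conditions that single out~$g$ among the roots of~$P_f(y, x_1,\dotsc,x_k)$; the sign queries in that phase are themselves on $B$-expressions and must be answered recursively by the same procedure; (ii)~if~$g$ is a multiple root of~$P_f$ one should replace~$P_f$ by its squarefree part, a preprocessing depending only on~$B$, not on the instance. Once these details are handled, the height and degree bounds of the first step keep every intermediate SLP, and every input to a~$\posslp$ query, of polynomial size, producing the claimed polynomial-time Turing reduction.
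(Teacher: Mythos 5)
Your overall architecture matches the paper's---zero bound from Weil heights, build polynomial-size $\{0,1,+,-,\times\}$-SLP approximations by induction on the gates, settle the final sign with a $\posslp$ query---so the skeleton is sound. But your subtlety~(ii) conceals a genuine gap, and it is exactly the gap the bulk of the paper is devoted to filling. Replacing $P_f \in \Q[y, x_1,\dotsc,x_k]$ by its squarefree part is instance-independent preprocessing and does not prevent the \emph{specialized} univariate polynomial $P_f(y, a_1,\dotsc,a_k)$ from having a multiple root at the values $a_i$ actually produced by the subcircuits. The obstruction is the discriminant locus: $P_f(y,x)=y^2-x$ is irreducible, hence squarefree, yet its specialization at $x=0$ is $y^2$. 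Whether the inputs land on the discriminant locus depends on the instance, so no once-and-for-all manipulation of $P_f$ can rule it out. Near such a degenerate point Newton loses quadratic convergence; worse, a doubly exponentially small coefficient error---which is all your inductive invariant guarantees---can displace the relevant real root by only singly exponentially little, or turn it into a complex conjugate pair, so the ``$O(n)$ iterations to target precision'' estimate breaks, and with it the whole induction.

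This is precisely what the paper's notion of a \emph{regular} circuit and the regularization Lemma~\ref{th-regularization} are for. The paper works with root gates $\r_\delta$ (largest real root of $\sum_i a_i x^i$) and calls such a gate regular only when the first $\delta$ derivatives of the polynomial are nonzero at the returned root, i.e.\ the root is simple and well separated. Lemma~\ref{th-regularization} converts any $B_d$-circuit into an equivalent regular one in polynomial time by recursion on the degree $\delta$ and the root index~$i$: Tarski-Seidenberg sign conditions plus choice gates detect which degenerate situation occurs at each gate, and degeneracy is resolved by deflation (polynomial division by $x-\zeta$) together with \emph{lower-degree} root gates $\r_{\delta'}$, $\delta'<\delta$. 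Only after this can the perturbation bound of Lemma~\ref{th-perturbations} and the Newton/bisection construction be applied safely. Your framework needs an analogue of this deflation-by-cases machinery; squarefree-part preprocessing is not a substitute. A smaller concern is your subtlety~(i): answering the bisection sign queries ``recursively by the same procedure'' mixes bisection midpoints with subcircuit values, and it is not immediate that the sizes stay polynomial. The paper sidesteps this by realizing bisection \emph{as a circuit} (Lemma~\ref{th-bisection}) within a degree-by-degree reduction chain $\posslp(B_d)\to\posslp(B_{d-1})\to\dotsb\to\posslp$, so every sign decision is deferred to the single final oracle.
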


\begin{theorem}\label{th-main-b}
Let $B\supset\{+,-\}$ be a finite set of continuous functions, semialgebraic over~$\Q$, such that $V(B)$ is dense. The following dichotomy holds:
either $\posslp(B)$ is in~$\mathsf P$, if all the functions in~$B$ are piecewise linear; or, if not, $\posslp$ and~$\posslp(B)$ are mutually
polynomial time Turing reducible.
\end{theorem}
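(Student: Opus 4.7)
The direction $\posslp(B) \le^P_T \posslp$ is immediate from Theorem~\ref{th-main-a}, so it remains to prove the two branches of the dichotomy.

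\emph{Piecewise linear case.} Suppose every $f \in B$ is piecewise affine with finitely many rational polyhedral pieces and rational affine coefficients. I would evaluate a closed $B$-circuit gate by gate, storing each intermediate value as a rational $N/D$ with $D$ represented by its prime factorization. The finite basis $B$ determines a fixed finite set of primes that can ever appear in any such $D$, and a single gate (an affine combination inside the relevant piece) raises $\log|N|$ and each prime exponent of $D$ by only $O(1)$, while choosing the correct piece requires a bounded number of rational linear inequality checks. Hence an $n$-gate circuit is evaluated in polynomial time, and its sign read off, giving $\posslp(B) \in \mathsf P$.

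\emph{Non-piecewise-linear case.} Pick $f \in B$ not piecewise affine. By tame topology, $f$ is $C^\infty$ off a lower-dimensional semialgebraic subset, and the failure of piecewise-linearity forces some $\partial^2_{x_i} f$ to be nonzero on a nonempty open semialgebraic set $U$, which I shrink to fix its sign, say positive. Density of $V(B)$ yields a point $q \in V(B)^{\mathrm{ar}(f)} \cap U$; set
\[
  g(x) \eqdef f(q_1, \dotsc, q_{i-1}, q_i + x, q_{i+1}, \dotsc),
\]
a $B$-computable univariate function with $g''(0) = 2c > 0$. The second difference
\[
  M(u,v) \eqdef g(u+v) - g(u) - g(v) + g(0) = 2c\,uv + h(u,v)
\]
gives a multiplication gadget, with $|h(u,v)| \le K \max(|u|,|v|)^{2+\delta}$ for fixed $K, \delta > 0$ by Fact~\ref{th-poly-bound} applied to the semialgebraic remainder.

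To reduce $\posslp$ to $\posslp(B)$: given a closed $\{+,-,\times\}$-circuit $C$ of size $n$, I would build a closed $B$-circuit $C'$ by (i) precomputing a doubly-exponentially small $\mu > 0$ via $O(n)$ iterated squarings with $M$ on a suitable seed in $V(B)$; (ii) replacing each constant $1$ of $C$ by $\mu$ and each $\times$ gate by $M$; and (iii) at every $+$ gate inserting scaling corrections---each a multiplication of a summand by an appropriate power of $\mu$, itself built by iterated squaring in $O(n)$ further $M$-gates---so that the two summands share a common $\mu$-scale. An inductive invariant $\tilde V_u = \Lambda_u V_u + E_u$ with $|E_u| \le \Lambda_u / 4$ then holds with a combinatorial scaling $\Lambda_u > 0$, and a single oracle query on $C'$ returns $\text{sign}(V(C))$.

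The \emph{main obstacle} is the quantitative error analysis of step (iii): $C$ may be a shared DAG of formal degree up to $2^n$, so the relative-order-$\mu^\delta$ error from each $M$-gadget compounds through many levels. The saving grace is that $\mu$ can be chosen super-exponentially small via iterated squaring, so the factor $\mu^\delta$ of slack in each gadget dominates the $2^{O(n)}$-fold accumulation. The real combinatorial work is orchestrating the scale corrections at $+$ gates so that the inductive invariant survives each step while $C'$ remains polynomial size.
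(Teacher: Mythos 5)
Your overall architecture matches the paper (both directions, with the non-trivial direction splitting into a piecewise-linear subcase that is decided directly and a non-piecewise-linear subcase that reduces $\posslp$ to $\posslp(B)$ via a quadratic gadget). However, there are two issues, one of which is a genuine gap.

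\emph{Piecewise linear case.} You restrict attention to ``rational polyhedral pieces and rational affine coefficients,'' but this is not implied by the hypotheses. A function semialgebraic over $\Q$ can be piecewise linear with \emph{irrational algebraic} slopes: for instance, the graph $\{(x,y) : y^2 = 2x^2,\ xy\ge0,\ y\ge0\}$ defines a $\Q$-semialgebraic piecewise linear function with slope $\sqrt2$. Your denominator-prime-factorization bookkeeping does not accommodate this, so the case analysis is incomplete. The paper handles it by fixing a number field $K$ containing all coefficients, choosing a $\Q$-basis $e_1,\dotsc,e_n$ of $K$, representing each value $x\in K$ as the vector $\sigma(x)\in\Q^n$, realizing each linear piece of each $g\in B$ as a rational linear map $\sigma\circ g\circ\sigma^{-1}$ on $(\Q^n)^m$, and using Tarski--Seidenberg to pick pieces and test sign via fixed Boolean combinations of rational polynomial conditions on the coordinates of $\sigma$. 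You would need this (or an equivalent) mechanism to cover algebraic coefficients.

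\emph{Non-piecewise-linear case.} Your second-difference gadget $M(u,v)=g(u+v)-g(u)-g(v)+g(0)$ and the ``zoom in with a super-exponentially small $\mu$'' idea are the same as the paper's Proposition~\ref{th-root-to-square} combined with Lemma~\ref{th-times-to-poly}. But you explicitly defer exactly the part that constitutes the bulk of the argument: making the scale corrections at $+$ gates uniform while keeping the output circuit polynomial in size. Two ideas from the paper that you are missing would close this. First, Lemma~\ref{th-times-to-poly} preprocesses the input circuit so that every path from a gate to the output has the \emph{same length} (adding at most $\size{c}^2$ dummy gates); with this normalization the scale at a gate is determined purely by its depth, so the scaling constants $\Lambda_u$ are canonical and the ``orchestration'' at $+$ gates becomes a fixed per-depth multiplication by a precomputed constant $k_d$. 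Second, Proposition~\ref{th-root-to-square} avoids exponential blow-up in the recursion by memoizing on pairs (subcircuit, scale) and by fixing the scales to be $f^{i+\size{n}+4}(k)$ for $0\le i<\size{n}$, so only $\size{n}^2$ distinct recursive calls occur. Without pinning these two points down, your step (iii) is a plan rather than a proof; as you state it, the invariant $\tilde V_u = \Lambda_u V_u + E_u$ on a general DAG has no control on the number of distinct $\Lambda_u$ that arise, and hence no polynomial bound on $\|C'\|$.
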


\section{Circuits with Gates for Polynomial Roots} \label{sect-polynomial-roots}

\def\r{{\operatorname{r}\mskip-1mu}}
\def\ch{\operatorname{ch}}
\noindent In this section, we will study circuits in the basis
$B_d\eqdef\{0,1,+,-,\times,\ch,\r_1\dotsc\r_d\}$ where 
$\ch(x,n,z,p)$ is a choice gate
\[
\ch(x,n,z,p) =
\begin{cases}
n &\text{if $x<0$} \\
z &\text{if $x=0$} \\
p &\text{if $x>0$}
\end{cases}
\]
and $\r_d\colon \R^{d+1}\to\R$ denotes
the function mapping a tuple $(a_0\dotsc a_d)$ to the largest real root
of the polynomial~$\sum_i a_ix^i\in\R[x]$, or to~$0$ when it doesn't exist.
Observe that, for
$d_1<d_2$, a $\r_{d_2}$~gate can simulate a $\r_{d_1}$ gate, and, in
particular, a division gate
\begin{align*}
	\r_{d_1}(a_0\dotsc a_{d_1}) &= \r_{d_2}(a_0\dotsc a_{d_1},0\dotsc0)
	&\frac{x}{y} &= \r_1(-x,y)
\end{align*}
with the convention that $x/0=0$. Nevertheless, for technical reasons
which will become clear later on, we prefer to include
all the gates~$\r_1\dotsc\r_d$ in~$B_d$ as different entities.
Clearly we have
\def\mored{\le^{\mathrm P}_{\mathrm m}}
\def\moequi{\equiv^{\mathrm P}_{\mathrm m}}
\[
\posslp(0,1,+,-,\times,\ch) \moequi \posslp(B_1)
\]
and
\[
\posslp(B_1) \mored \posslp(B_2) \mored \posslp(B_3) \mored \dotsb
\]
where $\mored$ and~$\moequi$ denote polynomial time many-one reducibility
and mutual many-one reducibility---the first equivalence is
standard: keep rationals as pairs numerator-denominator.

It is quite clear that $\posslp(B_1)$ is polynomial time
Turing equivalent to~$\posslp$. With some additional effort, one can see
that $\posslp(B_1)$ is, indeed, complete, in the sense of polynomial time
many-one reductions, for~$\BP({\mathsf P}_\R^0)={\mathsf P}^\posslp$. In
fact, given a real Turing machine, one can build in polynomial time
a $B_1$-circuit that represents its computation table,
using choice gates to handle the finite transition table.
In this section, we will frequently use choice gates to simulate Boolean
circuits, which is quite easy to do in full generality.
Moreover, with choice gates, we can build circuit representations of functions involving
definitions by cases, as long as the cases are distinguished by Boolean
combinations of equalities and inequalities of other representable
functions. An example having a $B_1$-circuit representation
is, for instance, $\max(x,y)$.
A less obvious one is the
function~$f$ mapping $(a_0\dotsc a_d)$ to the number of real roots
of~$\sum_i a_ix^i$, for a fixed~$d$. We may convince ourselves that $f$ is, in fact,
representable, observing that it must take one of the $d+1$ values~$0\dotsc
d$, and, by the Tarski-Seidenberg quantifier
elimination theorem, the choice is governed by Boolean combinations of
polynomial conditions.

The goal of this section is to prove the following statement.

\begin{proposition}\label{th-polynomial-roots}
For any fixed~$d$, the decision problem
$\posslp(B_d)$ is polynomial time
Turing reducible to~$\posslp$.
\end{proposition}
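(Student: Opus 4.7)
Following the paradigm outlined in Section~II, we combine a Weil-height-based zero bound with an approximate evaluation, both encoded by polynomial-size $\posslp$-circuits. Given a closed $B_d$-circuit $c$, we aim to produce a $\posslp$-circuit computing a rational $Z > 0$ such that either $c = 0$ or $|c| > Z$, and a second $\posslp$-circuit computing a rational $\tilde c$ with $|\tilde c - c| \le Z/2$. The sign of $c$ is then determined by a constant number of $\posslp$-oracle calls comparing $\tilde c$ with~$\pm Z/2$.

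For the zero bound, observe that $V(B_d) \subseteq \Q^{\text{alg}}$. By induction on the gates of $c$, we propagate upper bounds $H_s$ on the Weil height $H(s)$ and $D_s$ on the algebraic degree of each sub-circuit value $s$. Facts~\ref{th-heights-arith} and~\ref{th-heights-alg} give the recurrences: $+,-$ multiply the product of input heights by~$2$; $\times$ takes their product; an $\r_k$-gate multiplies the product of input heights by~$2^k$ and multiplies the degree by~$k$; a $\ch$-gate takes the maximum over the branches. Since $d$ is a fixed constant, $\log H_s$ and $\log D_s$ grow polynomially in the size of~$c$, so $H_s^{D_s}$ can be encoded by a polynomial-size $\{0,1,+,\times\}$-circuit. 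By Fact~\ref{th-heights-obv}, $Z := H_c^{-D_c}$ is then a valid zero bound.

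For the approximation, we show by induction on circuit depth that, for any target precision $2^{-n}$ with $n$ polynomial in the circuit size, we can build a $\posslp$-circuit computing $\tilde s$ with $|\tilde s - s| \le 2^{-n}$. The ring operations propagate errors by standard Lipschitz bookkeeping using the height bounds from the previous step. A $\ch$-gate is handled by recursively running the sign algorithm on its first input and then approximating the selected branch. The main obstacle is the $\r_k$-gate: for $s = \r_k(a_0, \dots, a_k)$, the map from coefficients to the largest real root is non-Lipschitz near multiple roots, and the actual degree $k' \le k$ of $P(x) = \sum_i a_i x^i$ is not known a~priori. The latter is handled by first running the sign algorithm recursively on the leading coefficients to discover $k'$. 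For the former, the same height-degree bookkeeping yields an a~priori lower bound $\sigma \ge H(P)^{-O(k)}$ on both the separation between distinct real roots of~$P$ and the distance from~$s$ to~$0$, with $\log(1/\sigma)$ polynomial; approximating the $a_i$ to a precision finer than~$\sigma$ and bisecting with $\posslp$-oracle calls on the resulting straight-line rational polynomial locates~$s$ to the desired precision in polynomially many steps. To identify the \emph{largest} real root specifically, we apply Tarski-Seidenberg off-line, separately for each $k' \le d$, to obtain a fixed quantifier-free semialgebraic procedure distinguishing roots by their ordering, which is hard-coded into the algorithm.
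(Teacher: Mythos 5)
Your high-level framework---a Weil-height zero bound combined with approximate evaluation---matches the paper's (Lemma~\ref{th-heights-bound}), and the degree-discovery step for $\r_k$-gates via recursive sign calls on leading coefficients is fine. But there are two genuine gaps in the approximation half, both of which the paper spends considerable effort circumventing.

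\textbf{Bisection cannot reach the required precision in polynomial size.} You write ``target precision $2^{-n}$ with $n$ polynomial in the circuit size,'' but Lemma~\ref{th-heights-bound} yields a zero bound of the form $2^{-2^{C_d\size c}}$: the degree of the algebraic number $c$ can be singly exponential in $\size c$, and Fact~\ref{th-heights-obv} then forces the zero bound to be \emph{doubly} exponentially small. You need $n \approx 2^{C_d\size c}$, not $n$ polynomial. Now, such a number can be \emph{represented} by a polynomial-size $\posslp$-circuit via iterated squaring, but \emph{bisection} to that precision requires roughly $2^{C_d\size c}$ steps, which is exponentially many oracle calls (or an exponentially large circuit if you encode the bisection). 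This is exactly why the paper passes through Newton's method: Lemma~\ref{th-bisection} only performs a coarse bisection to locate the order of magnitude of the root (polynomially many steps, since the exponent has polynomially many bits), after which quadratic convergence (Lemma~\ref{th-approx-zero}, Definition~\ref{def-approx-zero}) reaches precision $2^{-2^{E_d\size c}}$ in $O(\size c)$ Newton iterations, each implementable with finitely many $B_{d-1}$-gates.

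\textbf{Multiple roots make the coefficient-to-root map discontinuous, not merely non-Lipschitz.} Your separation bound $\sigma$ controls the distance between \emph{distinct} real roots of $P$, but the problematic case is when the largest real root $s$ has multiplicity $\ge 2$. Then arbitrarily small perturbations of the (algebraic) coefficients can make that root vanish into the complex plane entirely, at which point $\r_k$ jumps to the next real root or to $0$, and bisection on $\tilde P = \sum \tilde a_i x^i$ will never isolate $s$ because $\tilde P$ may have constant sign near $s$. Even when the root persists, the perturbation amplification near a multiplicity-$m$ root scales like an $m$-th root of the input error, and no uniform Lipschitz bookkeeping survives. The paper handles this by the regularization machinery (Definition~\ref{def-regular}, Lemma~\ref{th-regularization}): using Tarski-Seidenberg, every $\r_\delta$-gate whose polynomial would have a multiple or vanishing-derivative largest root is rewritten as a composition of lower-degree $\r_{\delta'}$-gates and polynomial divisions (the auxiliary circuits $s_{\delta,i}$), so that in the resulting \emph{regular} circuit every $\r_\delta$-gate computes a simple root at which the first $\delta$ derivatives are nonzero. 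Only then does the Lipschitz argument of Lemma~\ref{th-perturbations} go through, using the polynomial boundedness of the real field (Fact~\ref{th-poly-bound}) to control the local Lipschitz constant $1/\alpha$ uniformly over circuits of a given size. Your proposal has no analogue of this step, and without it the error analysis for $\r_k$-gates does not close.

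A smaller structural difference: the paper reduces $\posslp(B_d)$ to $\posslp(B_{d-1})$ and descends by induction, so that $\r_{d-1}$-gates remain available inside the replacement circuit (to locate critical points of the degree-$d$ polynomial and its derivatives before bisecting). Your proposal goes directly to $\posslp$ in one jump, which removes a tool the paper relies on in the Newton/bisection setup; this is not necessarily fatal, but it means the critical-point isolation would itself need a separate argument.
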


It is convenient to isolate a number of intermediate steps. First we will
prove a zero bound for $B_d$-circuits in Lemma~\ref{th-heights-bound}.
Then, we will prove, in Lemma~\ref{th-perturbations},
that, for a subclass of $B_d$-circuits that we call
regular, there is an effective bound connecting the error of an approximate
evaluating procedure at each gate, with the error accumulated at the end
of the evaluation. Third, we will show how to convert $B_d$-circuits
into regular $B_d$-circuits effectively in Lemma~\ref{th-regularization}.
Finally we will prove Proposition~\ref{th-polynomial-roots} giving an
evaluation procedure for regular $B_d$-circuits based on Newton's method.

\begin{lemma}\label{th-heights-bound}
For any fixed~$d$ there is a constant~$C_d$ such that for any
closed $B_d$-circuit~$c\neq0$ we have
\[
2^{-2^{C_d\size c}} < \abs c < 2^{2^{C_d\size c}}
\]
\end{lemma}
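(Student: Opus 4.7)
The plan is to bound, by induction on gate depth, both the absolute Weil height~$H$ and the algebraic degree~$D$ over~$\Q$ of the value computed at each gate of the circuit. Once both quantities are shown to be at most $2^{(d+1)^{\size{c}}}$ for the output, the lemma follows immediately from Fact~\ref{th-heights-obv}, which gives $H(c)^{-D(c)}\le\abs{c}\le H(c)^{D(c)}$.

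Every closed $B_d$-circuit evaluates to an algebraic number at each gate, since $0,1\in\Q$ and each of $+,-,\times,\ch,\r_1,\dotsc,\r_d$ sends tuples of algebraic numbers to algebraic numbers. My induction hypothesis is that $H(g),D(g)\le 2^{(d+1)^k}$ for every gate $g$ at depth~$k$; the base case is immediate because constants give $H=D=1$.

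For the inductive step, each gate type inflates heights and degrees in a controlled way. Fact~\ref{th-heights-arith} handles~$+,-,\times$, and the corresponding degree bound follows by observing that $a\pm b$ and $ab$ lie in $\Q(a,b)$, a field of degree at most~$D(a)D(b)$ over~$\Q$. The choice gate merely selects one of its inputs. For $\r_e$ with $e\le d$ applied to $(a_0,\dotsc,a_e)$, Fact~\ref{th-heights-alg} bounds the height of the root by $2^e\prod_i H(a_i)$, while the root lies in an extension of degree at most~$e$ over $\Q(a_0,\dotsc,a_e)$, so its degree over~$\Q$ is at most $e\prod_i D(a_i)$. In all cases both $H$ and~$D$ at the output are at most $2^d$ times the $(d+1)$-th power of the maxima over the inputs, which, after taking $\log_2$, gives the recursion $\log_2 x_{k+1}\le d+(d+1)\log_2 x_k$; a routine substitution $m_k=\log_2 x_k+1$ linearises this into $m_{k+1}\le(d+1)m_k$, propagating the claimed bound~$2^{(d+1)^k}$.

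Since the depth of any gate is at most~$\size{c}$, the output satisfies $H(c),D(c)\le 2^{(d+1)^{\size{c}}}$, whence Fact~\ref{th-heights-obv} yields $\abs{c}\le 2^{(d+1)^{\size{c}}\cdot 2^{(d+1)^{\size{c}}}}\le 2^{2^{C_d\size{c}}}$ for any $C_d>\log_2(d+1)+1$ and $\size{c}$ above a small $d$-dependent threshold; the lower bound is entirely symmetric. I do not foresee a genuine obstacle; the only step that requires care is the root gate, where one must keep in mind that the output can live in a proper algebraic extension of the field generated by the coefficients, but the factor~$e$ in Fact~\ref{th-heights-alg} together with the corresponding extension-degree estimate absorbs this cleanly.
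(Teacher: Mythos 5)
Your high-level plan (bound height and degree by induction, then finish with Fact~\ref{th-heights-obv}) is exactly the paper's approach, but your degree bound is too loose, and the final inequality collapses as a result.

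Tracking the degree gate-by-gate via $D(\r_e(a_0,\dotsc,a_e))\le e\prod_i D(a_i)$ is sound but yields the recursion $D_{k+1}\le d\cdot D_k^{d+1}$, whose solution $D(c)\le 2^{(d+1)^{\size c}}$ is \emph{doubly} exponential in $\size c$. Plugging this together with the doubly exponential height bound into $\abs c\le H(c)^{D(c)}$ gives an exponent of $(d+1)^{\size c}\cdot 2^{(d+1)^{\size c}}$, which grows like $2^{2^{\size c\log_2(d+1)}}$ -- doubly exponential in $\size c$ -- so your concluding inequality $(d+1)^{\size c}\cdot 2^{(d+1)^{\size c}}\le 2^{C_d\size c}$ is false for every constant $C_d$ once $\size c$ is large enough. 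You need the degree to be \emph{singly} exponential in $\size c$, as the paper notes, and the per-gate multiplicative recursion cannot give that, because it ignores that the inputs $a_0,\dotsc,a_e$ generate largely overlapping subfields.

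The remedy is a global rather than per-gate degree bound. Let $E$ be the subfield of $\R$ generated by the values of \emph{all} gates of~$c$. Starting from~$\Q$, each $+$, $-$, $\times$, or $\ch$ gate leaves the ambient field unchanged, while each $\r_e$~gate adjoins a root of a polynomial of degree $\le e\le d$ over what has already been generated, so it increases the degree by a factor $\le d$. Since there are at most $\size c$ such gates, $[E:\Q]\le d^{\size c}$, and the degree of every gate value -- in particular of~$c$ -- divides $[E:\Q]$, hence $D(c)\le d^{\size c}$. Combined with your (correct) height bound $H(c)\le 2^{2^{K_d\size c}}$, Fact~\ref{th-heights-obv} now gives $\abs c\le 2^{d^{\size c}\cdot 2^{K_d\size c}}=2^{2^{(K_d+\log_2 d)\size c}}$, which is the claimed double-exponential bound, and the lower bound is symmetric.
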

\begin{proof}
Using Fact~\ref{th-heights-arith} and Fact~\ref{th-heights-alg} we get a
constant~$K_d$ such that
\[
H(c) < 2^{2^{K_d\size c}}
\]
where $H(c)$ denotes the absolute Weil height of the algebraic number
represented by~$c$. The lemma follows immediately 
from Fact~\ref{th-heights-obv} observing that the
degree of~$c$ is bounded by a single exponential in~$\size c$.
\end{proof}

\begin{definition}[regular circuit]\label{def-regular}
Let $c$ be a closed $B_d$-circuit. We say that a gate~$g$ of~$c$ is
regular if
\begin{enumerate}
\item $g$ is $0$ or~$1$
\item $g$ is $+$, $-$, or~$\times$ and its inputs are regular gates
\item $g$ is $\ch(x,n,z,p)$ with $x$ regular $<0$ and $n$ regular
\item $g$ is $\ch(x,n,z,p)$ with $x$ regular $>0$ and $p$ regular
\item $g$ is $\r_\delta(a_0\dotsc a_\delta)$, its inputs $a_0\dotsc
a_\delta$ are
regular, the largest real root~$\zeta$ of~$p(x)\eqdef\sum_i a_ix^i$
exists, and the first $\delta$ derivatives of~$p$ do not vanish
at~$\zeta$.
\end{enumerate}
We say that $c$ is regular if its output gate is.
\end{definition}

\begin{observation}\label{th-taint}
A regular circuit may have non~regular gates in it, because cases 3 and~4
allow some of the inputs to be non~regular.
However, the outputs of non~regular gates
do not affect the value of the circuit, in the following sense. If we
perturb the evaluation of a regular circuit
introducing any error at a non~regular gate, the result of the evaluation
will not be
affected, because the consequences of the error can not spread further than the
taint of non-regularity.
\end{observation}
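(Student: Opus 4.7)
My plan is to formalize the observation by structural induction on the gates of the regular circuit~$c$, proceeding from the inputs up toward the output. The key insight is a direct inspection of Definition~\ref{def-regular}: the only way a gate can be regular while having a non\,regular input is via clauses 3 or~4, i.e.\ a choice gate $\ch(x,n,z,p)$ whose discriminator~$x$ is regular and has a definite strict sign. In that situation the value returned by~$\ch$ depends solely on~$x$ and on whichever of~$n$ or~$p$ is picked by the sign of~$x$; and both of those are required by the definition to be regular. Non\,regular inputs can therefore only enter a regular gate through a "dead" wire that the choice gate does not actually read.

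The first step is to make the statement precise. I would define a \emph{perturbed evaluation} of~$c$ to be any assignment of a real value to each gate that respects the specified operation at every regular gate but may be completely arbitrary at non\,regular gates. The claim to prove is then that, for any such perturbed evaluation, the value at the output gate coincides with the value given by the exact evaluation. Phrased this way, the observation reduces to showing, by induction on depth inside the regular sub\,DAG of~$c$, that every regular gate evaluates to the same real number under both the exact and the perturbed semantics.

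The inductive step splits according to which clause of Definition~\ref{def-regular} witnesses regularity. For clause~1 the value is a constant. For clause~2 all inputs are regular, so the inductive hypothesis applies directly. For clauses~3 and~4 the discriminator~$x$ is regular, hence unchanged by the hypothesis, so the branch selected is the same as in the exact evaluation; that branch is either~$n$ (clause~3) or~$p$ (clause~4), which is regular and therefore also unchanged, while the other two wires — possibly non\,regular — are ignored. For clause~5, all inputs $a_0,\dotsc,a_\delta$ are regular and hence unchanged, so the polynomial whose largest real root is taken is identical, and therefore the output is identical.

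I do not anticipate a genuine mathematical obstacle here; the non-vanishing of the first~$\delta$ derivatives of~$p$ at~$\zeta$, required in clause~5, plays no role in this observation (it will be needed later, for Newton's method in Lemma~\ref{th-perturbations}). The only real care needed is notational: setting up the "perturbed evaluation" formalism cleanly, so that the induction above can quote it without ambiguity and so that the statement can later be reused, in Section~\ref{sect-polynomial-roots}, as the license for performing arbitrarily sloppy approximations at non\,regular gates during the evaluation of a regular circuit.
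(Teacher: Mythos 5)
The paper states Observation~\ref{th-taint} without a written proof, treating it as immediate from Definition~\ref{def-regular}; your structural induction on regular gates, with the case split over the five defining clauses, is precisely the implicit argument made explicit and is correct. You have also correctly identified that the derivative non-vanishing condition in clause~5 is irrelevant here, and that the only way a non-regular wire reaches a regular gate is as an unread branch of a $\ch$ gate whose regular discriminator fixes the selection.
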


\begin{lemma}\label{th-perturbations}
For any fixed~$d$ there is a constant~$E_d$ such that, for any
regular $B_d$-circuit~$c$ the following holds.
Evaluate $c$ with infinite precision, and perturb the
procedure adding at each $\r_\delta$~gate an error that is smaller in
absolute value than
\[e 2^{-2^{E_d\size c}}\]
for $e\in[0,1]$ at regular gates, and unconstrained at non-regular gates.
Then the error accumulated on the result is less than
\[
e 2^{-2^{C_d\size c}}
\]
\end{lemma}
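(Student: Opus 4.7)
The plan is to argue by structural induction on $c$, proving the stronger invariant that at every regular gate $g$ the perturbed value $\tilde g$ satisfies
\[
|\tilde g - g| \le e \cdot 2^{-2^{E_d \size c}} \cdot A^{\operatorname{depth}(g)},
\]
where $A = A_d(\size c)$ is a uniform amplification factor to be fixed at the end. Since the output has depth at most $\size c$, the stated bound will follow by choosing $E_d$ large enough that $A^{\size c} \cdot 2^{-2^{E_d \size c}} \le 2^{-2^{C_d \size c}}$.

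The arithmetic and choice cases are routine. Constants introduce no error. For $+$ and $-$ the errors from the two inputs add, contributing a factor of~$2$. For $\times$ the product rule bounds the error by $|a|\,|\tilde b - b| + |b|\,|\tilde a - a| + |\tilde a - a|\,|\tilde b - b|$, and Lemma~\ref{th-heights-bound} applied to the closed subcircuits rooted at $a$ and $b$ gives $|a|, |b| \le 2^{2^{C_d \size c}}$, so the amplification is $O(2^{2^{C_d \size c}})$. For a regular $\ch(x,n,z,p)$ gate, say in case $x > 0$, the subcircuit at $x$ is regular and nonzero, hence Lemma~\ref{th-heights-bound} yields $|x| \ge 2^{-2^{C_d \size c}}$; once $E_d$ is large enough the accumulated error on $x$ is strictly less than $|x|$, so $\tilde x > 0$ still selects $p$ and its error propagates unchanged. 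As anticipated by Observation~\ref{th-taint}, the unconstrained perturbations at non-regular gates are harmless: they can only reach a regular gate through an unused branch of a regular $\ch$ gate, since every other regular gate type demands regular inputs.

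The main obstacle is the regular $\r_\delta(a_0,\dots,a_\delta)$ gate, whose analysis rests on a quantitative implicit function theorem. Set $P(x) = \sum_i a_i x^i$ and let $\zeta$ be its largest real root; by regularity $P'(\zeta) \neq 0$, so $\zeta$ depends smoothly on the coefficients with $\partial\zeta/\partial a_i = -\zeta^i / P'(\zeta)$. The required quantitative estimates all come from applying Lemma~\ref{th-heights-bound} to suitable $B_d$-subcircuits of polynomial size: an upper bound $|\zeta| \le 2^{2^{C' \size c}}$, a lower bound $|P'(\zeta)| \ge 2^{-2^{C' \size c}}$, and a lower bound of the form $2^{-2^{C' \size c}}$ on the gap between $\zeta$ and the next real root of $P$, which guarantees that $\zeta$ persists as the largest real root after small perturbation (the next root itself being expressible via $\r_{\delta-1}$ applied to $P(x)/(x-\zeta)$). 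Together these bound the Lipschitz constant of the root function in the coefficients by $2^{2^{C'' \size c}}$; so a perturbation of the inputs $a_i$ by $\eta$ moves $\zeta$ by at most $\eta \cdot 2^{2^{C'' \size c}}$, and adding the $e \cdot 2^{-2^{E_d \size c}}$ introduced at the gate itself gives a per-gate amplification of $2^{2^{C''' \size c}}$.

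Taking $A = 2^{2^{C \size c}}$ with a uniform $C$ dominating the per-gate factors above, the total amplification on any path to the output is at most $A^{\size c} = 2^{\size c \cdot 2^{C \size c}} \le 2^{2^{(C+1)\size c}}$. Choosing $E_d \ge C + C_d + 2$ therefore guarantees $A^{\size c} \cdot 2^{-2^{E_d \size c}} \le 2^{-2^{C_d \size c}}$ for all $\size c \ge 1$, completing the induction. The hardest step is the quantitative root-perturbation bound at the $\r_\delta$ gate, which critically couples the regularity condition with the height-based upper and lower bounds supplied by Lemma~\ref{th-heights-bound}.
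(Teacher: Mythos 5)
Your overall framework---induction over gates, a per-gate amplification factor $K_g$ with the statement $(\star_g)$, and final bookkeeping via $A^{\size c}\cdot 2^{-2^{E_d\size c}}\le 2^{-2^{C_d\size c}}$---matches the paper's structure closely; the constants, choice gates, $\pm$, and $\times$ are handled the same way. The genuine divergence, and the heart of the lemma, is at the $\r_\delta$ gate. Where you reach for a quantitative implicit function theorem with explicit height-based constants, the paper instead defines the semialgebraic function $\alpha(x)=\sup\{\rho\le1\mid\r_\delta\text{ is }1/\rho\text{-Lipschitz on }B_\rho^\infty(x)\}$, proves that $\alpha$ attains its sup and is $1$-Lipschitz, exhibits a uniform compact family $R'(t)$ with $R(N)\subset R'(2^{2^{LN}})\subset R$, and finally invokes Fact~\ref{th-poly-bound} to get $\beta(t)=\max_{R'(t)}1/\alpha$ polynomially bounded. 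This ``soft'' route buys a complete bypass of all the root-perturbation geometry, at the cost of nonconstructive constants.

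As written, your $\r_\delta$ step has a real gap. First, a lower bound on the gap between $\zeta$ and the \emph{next real root} of $P$ does not by itself guarantee that $\zeta$ persists as the largest real root after perturbing the coefficients: a complex conjugate pair of roots of $P$ with real part $>\zeta$ and small imaginary part can merge and become real, producing a new, larger real root. To rule this out you need an additional lower bound on the imaginary parts of complex roots lying to the right of $\zeta$ (or, more robustly, a lower bound on the discriminant of $P$), and you would have to argue that this quantity is itself expressible as a $B_d$-circuit of size $O(\size c)$ so that Lemma~\ref{th-heights-bound} applies. Second, the pointwise derivative $\partial\zeta/\partial a_i=-\zeta^i/P'(\zeta)$ only gives a local expansion at $x_0$; turning it into a Lipschitz bound over the ball of radius $\eta$ swept out by the perturbation requires controlling the variation of $P'$ there, i.e.\ an upper bound on $P''$ on a neighbourhood of $\zeta$ (a Newton--Kantorovich-type estimate). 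That bound is available from the height bounds on $|a_i|$ and $|\zeta|$, but it is a necessary ingredient you do not supply. Both omissions are fillable in the spirit of your argument, so this is less a wrong approach than an underspecified one---but the paper's approach via $\alpha$, compactness, and polynomial boundedness handles exactly these points automatically, which is why it is the cleaner route here.
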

\begin{proof}
We study the {\it loss of precision\/} incurred by our perturbed
evaluation procedure at each gate.

By Lemma~\ref{th-heights-bound} and induction on the size of~$c$, the
choices performed by regular $\ch$~gates are unaffected by the accumulated
errors.
Therefore, as a consequence of Observation~\ref{th-taint},
the errors occurring at non-regular gates
have no influence on the value of~$c$, hence we
can simply ignore non-regular gates.
We will show that, for each $g\in\{+,-,\r_1\dotsc\r_d\}$, there is a
constant $K_g$, depending only on~$g$, such that the following holds\\
($\star_g$) for $B_d$-circuits of size bounded by~$N$, if the inputs of 
a regular $g$~gate are perturbed by at most~$\epsilon<2^{-2^{K_gN}}$,
then the perturbation resulting on the output of that gate
is bounded by~$\epsilon2^{2^{K_gN}}$.

The statements ($\star_{\pm}$) are immediate with~$K_{+}=K_{-}=0$. In
order to obtain ($\star_{\times}$), it suffices to choose
$K_{\times}=C_d+1$ and the bound follows from
Lemma~\ref{th-heights-bound}. The only case left is that of
$\r_\delta$~gates.

Now we set out to find~$K_{\r_\delta}$ for a fixed~$\delta\le d$.
A regular $\r_\delta$~gate
computes the largest real root of a polynomial of
degree~$\delta$, which, by the regularity, must be a single root.
Define the set $R\subset\R^{\delta+1}$ as the set of those tuples~$x$ such
that the largest real root of the polynomial represented by~$x$ exists and
is a single root.
Consider a tuple~$x_0\in R$ representing a polynomial with
largest real root~$r= \r_\delta(x_0)$. Clearly, there is a
positive~$A\le1$
such that $\r_\delta$ is $1/A$-Lipschitz in a
neighbourhood of~$x_0$ of radius~$A$ with respect to the maximum norm,
i.e.\ for all tuples~$\epsilon_1,\epsilon_2\in\R^{\delta+1}$ with
$\abs{\epsilon_1}_\infty,\abs{\epsilon_2}_\infty<A$ we have
\[
\abs{\r_\delta(x_0+\epsilon_1)-\r_\delta(x_0+\epsilon_2)} \le
\frac{\abs{\epsilon_1-\epsilon_2}_\infty}{A}
\]
and, in particular, $R$ is an open set.
Now we intend to choose $A$ in a uniform way, to this aim, for $x\in R$,
our choice will be
\[
\alpha(x) = \sup \{\rho\le1\;|\;
\text{$\r_\delta$ is $1/\rho$-Lipschitz in $\operatorname{B}^\infty_\rho(x)$}
\}
\]
where $\operatorname{B}^\infty_\rho(x)$ denotes the open ball of
radius~$\rho$ around~$x$ in the maximum norm. By Tarski-Seidenberg
$\alpha$ is a semialgebraic function from~$R$ to~$]0,1]$ -- in fact, the
condition~$\rho\le1$ is there just to ensure that the supremum always
exists. We claim that the supremum is, indeed, a maximum,
and that $\alpha$ is $1$-Lipschitz with
respect to the maximum norm. The first claim is immediate because the
Lipschitz condition is closed. For the second claim, let $x_0$
and~$x_1$ be elements of~$R$ with $\abs{x_0-x_1}_\infty<\alpha(x_0)$.
Consider $A\eqdef\alpha(x_0)-\abs{x_0-x_1}$. Then, since
$A\le\alpha(x_0)$, we have that $\r_\delta$ is
$1/A$-Lipschitz in $\operatorname{B}_{\alpha(x_0)}(x_0)$, and, in particular, it is
$1/A$-Lipschitz in $\operatorname{B}_A(x_1)$. As a consequence
$\alpha(x_1)\ge A = \alpha(x_0) - \abs{x_0-x_1}_\infty$. Swapping $x_0$
and~$x_1$ we have the opposite inequality, hence the claim.

We turn our attention to the set~$R(N)$ of all elements of $R$ whose
coordinates can be represented
by $B_d$-circuits of size at most~$N$.
A family $\{X(t)\}_{t\in\R}$ of semialgebraic subsets of $\R^n$ is said to
be uniform if, as a subset of $\R^{n+1}$, the family is semialgebraic.
Our aim is to find a positive integer~$L$
and a uniform family of compact semialgebraic sets~$R'(t)$ such that
\[
R(N) \subset R'(2^{2^{LN}}) \subset R
\]
holds for all~$N$.
By~\cite[Theorem~2.4.1]{PreDel} there are finitely
many polynomials over the integers~$p_{i,j}$ such that $R$ can be written
in the following form
\[
R = \bigcup_i \bigcap_j \left\{x\in\R^{\delta+1} \;\middle|\; 0 < p_{i,j}(x)\right\}
\]
Let $S$ be the maximum size of the circuits representing the
polynomials~$p_{i,j}$. Fix $L$ in such a way that $C_d((d+1)N+S)\le LN$ for all positive integer~$N$.
By Lemma~\ref{th-heights-bound}
\[
R(N) \subset
\left\{ x\in\R^{\delta+1} \;\middle|\;
2^{-2^{LN}} \le \max_i \min_j p_{i,j}(x)
\right\} 
\subset R
\]
The middle set is closed, and, again by
lemma~\ref{th-heights-bound}, we can intersect it with a box of
radius~$2^{2^{LN}}$. Hence we have our set
\[
R'(t) \eqdef
\left\{ x\in\R^{\delta+1} \;\middle|\;
(\abs{x}_0 \le t) \wedge (\frac{1}{t} \le \max_i \min_j p_{i,j}(x))
\right\}
\]

Finally we consider the function
\[
\beta(t) \eqdef \max_{x\in R'(t)}\frac{1}{\alpha(x)}
\]
which, again, is semialgebraic by Tarski-Seidenberg.
The set~$R'(t)$ is compact and $1/\alpha(x)$ is continuous (since $\alpha(x)$ is
positive and Lipschitz), therefore the function~$\beta$ is well defined.
Since the structure~$(\R,+,\times)$
is polynomially bounded -- Fact~\ref{th-poly-bound} -- there are positive integers $m$
and~$n$ such that~$\beta(t) < mt^{n}$ for all~$t\ge 1$.
Let $K_{\r_\delta}$ be such that
\[
m 2^{n 2^{LN}} \le 2^{2^{K_{\r_\delta}N}}
\]
for all positive integers~$N$.
Considering circuits of size bounded by~$N$, we know that, by construction,
if we perturb the inputs of a regular~$\r_\delta$ gate by an additive
error bounded by
\[\epsilon<2^{2^{K_{\r_\delta}N}}\le\frac{1}{\beta(2^{2^{LN}})}\]
then the perturbation resulting on the output is bounded
by
\[\epsilon\beta(2^{2^{LN}})\le\epsilon2^{2^{K_{\r_\delta}N}}\]
Hence ($\star_{\r_\delta}$) is established.

We have completed the proof of the statements~($\star_g$).
Let $K$ be $\max_g(K_g)$.
Now, we choose~$E_d$ in such a way that
\[
(1 + 2^{2^{K N}})^{N-1} 2^{2^{C_d N}}
\le 2^{2^{E_d N}}
\]
for all positive integer~$N$.
We can see, by induction, that the accumulated error on a gate at
depth~$i$ is bounded by
\[
(1+2^{2^{K\size{c}}})^{i+1} 2^{-2^{E_d\size{c}}}
\]
The lemma follows.
\end{proof}

\begin{lemma}\label{th-regularization}
There is a polynomial time procedure that given a closed $B_d$-circuit~$c$
generates a regular $B_d$-circuit~$\bar{c}$ such that~$\bar{c}=c$.
\end{lemma}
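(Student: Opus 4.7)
The plan is to proceed by structural induction on $c$: for every gate $g$ in topological order, I would build a regular $B_d$-circuit $\bar g$ computing the same real value as $g$, and then set $\bar c$ to be the circuit obtained for the output gate. The constant gates $0$, $1$ are regular by case~1 of Definition~\ref{def-regular}, and an arithmetic gate $g_1\pm g_2$ or $g_1\times g_2$ is regularized by applying $\pm$ or $\times$ to the inductively-regularized inputs, which is immediately regular by case~2. So the entire content of the argument lies in the two remaining gate types: $\ch$ and~$\r_\delta$.

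For a gate $g=\ch(x,n,z,p)$ with regularized inputs $\bar x,\bar n,\bar z,\bar p$ already in hand, the key idea is to sidestep the fact that a $\ch$-gate can never be regular at $x=0$, by rewriting $\ch$ as a single $\r_1$-gate whose polynomial has a strictly positive (hence nonzero) leading coefficient. Concretely, since $\r_1(-N,D)=N/D$ when $D\neq 0$, one seeks $N$ and $D$ semialgebraic in $x,n,z,p$ with $D>0$ everywhere and $N/D=\ch(x,n,z,p)$. A candidate is $D=x^++x^-+\mathbb{1}[x=0]$ (always $\ge 1$) and $N=x^+p+x^-n+\mathbb{1}[x=0]\cdot z$, where $x^\pm$ and $\mathbb{1}[x=0]$ are built from $\r_1$ and $\r_2$ (for example, $x^+=\r_2(0,-\bar x,1)$ and $\mathbb{1}[x=0]=1-\r_1(-x^+,\bar x)-\r_1(-x^-,-\bar x)$). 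For a $\r_\delta$-gate with regularized inputs, I would invoke Tarski--Seidenberg to partition $\R^{\delta+1}$ into finitely many semialgebraic cells on which the structure of the polynomial (its true degree, its gcd with its derivative, the position and multiplicity of its largest real root) is fixed; since $\delta$ is constant, the partition is hard-coded into the algorithm. On each cell the largest real root admits a simple-root expression as a $\r_{\delta'}$ for some $\delta'\le\delta$ applied to the polynomial $p/\gcd(p,p')$ (or to $p$ with the vanishing leading coefficients dropped), and these case-specific expressions are combined by a cascade of $\ch$-selectors whose tests come from the defining inequalities of the cells.

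The main obstacle is ensuring the whole composite circuit meets \emph{every} condition of Definition~\ref{def-regular}, not merely that the outer $\r_1$ or $\r_\delta$ gate has a well-behaved polynomial. The auxiliary subcircuits computing $x^\pm$ and $\mathbb{1}[x=0]$ (and similarly the case-detection circuits for $\r_\delta$) contain $\r$-gates that are themselves irregular precisely at the degenerate values — for instance $\r_2(0,-\bar x,1)$ has a double root at $\bar x=0$. The resolution leans on Observation~\ref{th-taint}: it suffices that the output gate is regular with its \emph{taken} branch regular, while irregular sub-gates may live in non-taken branches. Accordingly, I would arrange each case-split so that the potentially-irregular auxiliary $\r$-gates are routed into the $z$-slot (or the $n$/$p$-slot corresponding to the non-active sign) of an outer regular $\ch$, whose test is taken from the Tarski--Seidenberg cell structure and is therefore strictly nonzero on the cell being selected. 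Iterating this nesting, the taken branch from the output down to the leaves touches only regular gates, giving the required regular circuit $\bar c$ of polynomial size, and constructed in polynomial time because both the local rewriting rules and the hard-coded Tarski--Seidenberg case tables depend only on the fixed parameter~$d$.
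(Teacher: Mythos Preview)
Your overall plan, and the idea of handling $\r_\delta$ by a hard-coded Tarski--Seidenberg case analysis, is close in spirit to the paper's argument. But there is a genuine gap in your treatment of $\ch$-gates, and it is exactly the gap that the paper closes with the zero bound of Lemma~\ref{th-heights-bound}, which you never invoke.

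The circularity is this. By Definition~\ref{def-regular} a $\ch$-gate is regular only in cases~3 and~4, i.e.\ only when its first argument is \emph{strictly} nonzero; there is no regular case for $\ch$ with test~$0$. Your replacement $\r_1(-N,D)$ has a regular output only if $N$ and~$D$ are themselves fed by regular gates, yet your recipe for $N,D$ passes through $x^+,x^-,\mathbb{1}[x=0]$, whose defining $\r$-gates you concede are irregular at $x=0$. Your rescue---hide them in a non-taken branch of an ``outer regular $\ch$ whose test comes from the Tarski--Seidenberg cell structure''---fails precisely on the cell $\{x=0\}$: that cell is cut out by an \emph{equality}, and any polynomial test separating it from its neighbours vanishes there, so the outer $\ch$ again has test~$0$ and is irregular. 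In short, you are trying to make a regular three-way branch on the sign of an arbitrary real~$x$; the definition of regularity does not allow that.

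The paper's proof breaks the circularity by observing that the first argument of each $\ch$-gate is not an arbitrary real but the value of a closed $B_d$-circuit of size at most~$\size{c'}$. By Lemma~\ref{th-heights-bound} such a value is either~$0$ or has absolute value exceeding $2^{-2^{C_d\size{c'}}}$. Hence the shifted tests $2^{2^{C_d\size{c'}}}x\pm 1$ are guaranteed nonzero, and one replaces $\ch(x,n,z,p)$ by two nested $\ch$-gates on those tests. That is the missing idea; once you have it, the gadgets $x^\pm$ and $\mathbb{1}[x=0]$ become unnecessary.

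A smaller point on $\r_\delta$: passing to $p/\gcd(p,p')$ makes the largest root simple, but case~5 of Definition~\ref{def-regular} demands that \emph{all} first $\delta'$ derivatives be nonzero at~$\zeta$, and this can fail even for square-free~$q$ (e.g.\ $q(x)=(x-1)(x^2-2x+2)$ has $q''(1)=0$). The paper instead detects, via Tarski--Seidenberg, whether $\zeta$ is a root of some $p^{(k)}$ and, if so, recurses to express $\zeta$ through that lower-degree polynomial.
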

\begin{proof}
Our procedure consists of two steps: first we regularize the
$\r_\delta$~gates, and then we deal with the choice gates.
We say that a gate~$g$ is quasi-regular if
\begin{enumerate}
\item $g$ is $0$ or~$1$
\item $g$ is $+$, $-$, or~$\times$ and its inputs are quasi-regular gates
\item $g$ is $\ch(x,n,z,p)$ with $x$ quasi-regular $<0$ and $n$ quasi-regular
\item $g$ is $\ch(x,n,z,p)$ with $x$ quasi-regular $=0$ and $z$ quasi-regular
\item $g$ is $\ch(x,n,z,p)$ with $x$ quasi-regular $>0$ and $p$ quasi-regular
\item $g$ is $\r_\delta(a_0\dotsc a_\delta)$, its inputs $a_0\dotsc a_\delta$ are
quasi-regular, the largest real root~$\zeta$ of~$p(x)\eqdef\sum_i a_ix^i$
exists, and the first $\delta$ derivatives of~$p$ do not vanish
at~$\zeta$.
\end{enumerate}
A quasi-regular circuit is one whose output gate is quasi-regular.
In other words, being quasi-regular is like being regular, except that the
first arguments of choice gates may be zero.

For the first step, our procedure generates a quasi-regular $B_d$-circuit~$c'$
such that~$c'=c$.
It suffices to prove that we can replace $\r_\delta(a_0\dotsc a_d)$
with a suitable circuit~$s_\delta(a_0\dotsc a_d)$ in such a way that the
output of~$s_\delta$ is
quasi-regular whenever its inputs are quasi-regular.
In order to construct~$s_\delta$, we produce a more general family of
circuits~$s_{\delta,i}$ such that $s_{\delta,i}(a_0\dotsc a_\delta)$ evaluates to the $i$-th
largest real root of~$p$ (counted with multiplicity), or to~$0$ if said root does not
exist. Clearly the choice~$s_\delta=s_{\delta,1}$ works. By induction we will show
how to build~$s_{\delta,i}$ using all~$s_{\delta',i'}$
for~$\delta'<\delta$, or for $\delta'=\delta$
and~$i'<i$. If~$i=1$, we use choice gates and polynomial conditions to
test whether the conditions for regularity fail---i.e.\ whether the degree
of~$p$ is less than~$\delta$, or, using Tarski-Seidenberg,
whether the largest root of~$p$ either does not exist or it is a root of some derivative of~$p$.
In each of these cases, we choose to use the appropriate~$s_{\delta',i'}$ or the
constant~$0$. Otherwise, we use~$\r_\delta$.
If~$i>1$, we use Tarski-Seidenberg again, to guard against the case in which
the required root does not exist, and if it exists we use $s_{\delta-1,i-1}$ applied to the coefficients
of~$p(x)/(x-s_{\delta,1}(a_0\dotsc a_\delta))$ which can be computed by polynomial
division (hence using $+$, $-$, and~$\times$, because the denominator is
a monic polynomial).

Now we have a quasi-regular circuit~$c'$ of size bounded by a multiple
(depending on~$d$) of the size of~$c$. By Lemma~\ref{th-heights-bound}, for
each gate $g$ of~$c'$, either $g=0$, or $\abs{g}>2^{-2^{C_d\size{c'}}}$.
Hence
\begin{multline*}
\ch(g,n,z,p) =\\
\ch(2^{2^{C_d\size{c'}}}g+1,n,0,\ch(2^{2^{C_d\size{c'}}}g-1,z,0,p))
\end{multline*}
where, clearly, the first arguments of the $\ch$~gates on the right hand side
can never be zero. Therefore, performing the substitution above on all
gates of~$c'$ gives us a regular circuit~$\bar{c}$. Finally, we observe that our
substitution does entail at most a linear increase in size from~$c'$
to~$\bar{c}$. In fact, $2^{2^n}$ can be computed by a
circuit of size linear in~$n$ by iterated squaring.
\end{proof}

As a last step before the proof of Proposition~\ref{th-polynomial-roots}, we
single out two statements of a technical but otherwise uncomplicated nature.
For the first one, let us remind Smale's notion of approximate zero of a real
function.

\def\piripi{following~\cite[Chapter~8 \S~1]{BCSS}}
\begin{definition}[approximate zero---\piripi]\label{def-approx-zero}
Let $f\colon\R\to\R$ be a differentiable function and $\zeta$ be a zero
of~$f$.
We say that a real
number $z$ is an approximate zero of~$f$ associated to~$\zeta$ if
the sequence~$\{z_i\}_i$ of the iterates
of Newton's approximation method applied to~$f$ starting from~$z_0=z$
satisfies the following condition for all~$i$
\[
\abs{z_i-\zeta} \le 2^{1-2^i} \abs{z-\zeta}
\]
\end{definition}

\begin{lemma}\label{th-approx-zero}
Let $f\colon\R\to\R$ be a twice differentiable function and $\zeta$ be a zero
of~$f$. Consider three real numbers $a<b<c$
such that the following conditions hold
\begin{enumerate}
\item $c-b \le (b-a)/2$
\item $b\le\zeta\le c$
\item the first derivative of~$f$ is positive on~$]a,c]$
\item \label{cond-bisection} there is $e$ such that $2^e\le f''(x)\le 2^{e+1}$ for all~$x\in[a,c]$
\end{enumerate}
then $c$ is an approximate zero of~$f$ associated to~$\zeta$.
\end{lemma}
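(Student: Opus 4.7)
The plan is to run the standard analysis of Newton's method, using the four hypotheses to bound the geometric factor that controls quadratic convergence.

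First, I would verify that the Newton iterates $z_0=c$, $z_{i+1}=z_i-f(z_i)/f'(z_i)$ stay in $[\zeta,c]$. Since $f''\ge 2^e>0$ on $[a,c]$ the function $f$ is strictly convex there, and condition~3 gives $f'>0$ on $]a,c]$; hence $f$ is increasing on $[\zeta,c]$ with $f(z_i)\ge f(\zeta)=0$ for $z_i\ge\zeta$, so the Newton step subtracts a nonnegative quantity. Writing $f(\zeta)=0$ via Taylor's formula at $z_i$ yields, for some $\xi_i$ between $\zeta$ and $z_i$,
\[
z_{i+1}-\zeta=\frac{f''(\xi_i)}{2f'(z_i)}(z_i-\zeta)^2,
\]
which shows both that $z_{i+1}\ge\zeta$ and (combined with the previous observation) that $z_{i+1}\le z_i$. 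By induction the whole sequence lives in $[\zeta,c]\subset[b,c]\subset[a,c]$, so conditions~3 and~4 apply at every relevant point.

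Next I would bound the coefficient $f''(\xi_i)/(2f'(z_i))$ uniformly. The numerator is at most $2^{e+1}$ by condition~4. For the denominator, condition~3 forces $f'(a)\ge 0$ by continuity, and integrating $f''\ge 2^e$ from $a$ to $b$ gives $f'(b)\ge 2^e(b-a)$; since $f'$ is increasing and $z_i\ge\zeta\ge b$, we have $f'(z_i)\ge 2^e(b-a)$. Therefore
\[
\abs{z_{i+1}-\zeta}\le\frac{2^{e+1}}{2\cdot 2^e(b-a)}(z_i-\zeta)^2=\frac{(z_i-\zeta)^2}{b-a}.
\]
Now invoke condition~1: $|z_0-\zeta|=c-\zeta\le c-b\le(b-a)/2$, so $1/(b-a)\le 1/(2|z_0-\zeta|)$, giving the clean recurrence
\[
\abs{z_{i+1}-\zeta}\le\frac{\abs{z_i-\zeta}^2}{2\abs{z_0-\zeta}}.
\]

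Finally, a straightforward induction on $i$ turns this quadratic recurrence into the desired bound $\abs{z_i-\zeta}\le 2^{1-2^i}\abs{z_0-\zeta}$: the base case $i=0$ is trivial, and if the inequality holds for $i$ then
\[
\abs{z_{i+1}-\zeta}\le\frac{(2^{1-2^i}\abs{z_0-\zeta})^2}{2\abs{z_0-\zeta}}=2^{1-2^{i+1}}\abs{z_0-\zeta}.
\]
There is no real obstacle here; the only point that needs a little care is the uniform lower bound on $f'(z_i)$, which is exactly where hypothesis~4 (together with the sign information in hypothesis~3) is used, and the matching between the Lipschitz constant $1/(b-a)$ and $|z_0-\zeta|$, which is exactly what hypothesis~1 is designed to provide.
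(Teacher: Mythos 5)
Your proof is correct and follows essentially the same route as the paper: both start from the Newton error formula $z_{i+1}-\zeta=\tfrac{f''(\xi_i)}{2f'(z_i)}(z_i-\zeta)^2$, bound $f''$ above by condition~4, bound $f'(z_i)$ below by integrating $f''\ge 2^e$ from $a$ (where $f'(a)\ge 0$ by condition~3), and then use condition~1 to close the quadratic-convergence induction. The only cosmetic difference is that you convert the bound into a ``clean'' recurrence with denominator $2|z_0-\zeta|$ (which you should note requires separately dispatching the trivial degenerate case $\zeta=c$, where $z_0=\zeta$), whereas the paper keeps the $z_i-a$ denominator and uses the ratio inequality $(c-\zeta)/(z_i-a)\le(c-b)/(b-a)\le 1/2$ inside the induction; both are equivalent.
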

\begin{proof}
Let $z_i$ be the $i$-th iterate of Newton's method starting from~$z_0=c$.
It is well known
(for instance \cite[Formula~2.2.2]{Atkinson}) that
\[
z_{i+1}-\zeta = \frac{f''(\xi_i)}{2f'(z_i)}(z_i-\zeta)^2
\]
for some~$\xi_i\in[\zeta,z_i]$.
Observing that
\begin{align*}
f''(\xi_i)&\le 2^{e+1} & f'(z_i)&\ge (z_i-a)2^e
\end{align*}
we get by induction on~$i$
\[
z_{i+1}-\zeta \le \frac{(z_i-\zeta)^2}{z_i - a}
\le \frac{2^{2-2^{i+1}}(c-\zeta)^2}{z_i-a} \le 2^{1-2^{i+1}}(c-\zeta)
\]
where the last inequality follows from
\[\frac{c-\zeta}{z_i-a}\le\frac{c-b}{b-a}\le\frac{1}{2}\]
\end{proof}

\begin{lemma}\label{th-bisection}
Fix a base~$B=\{0,1,+,-,\times,\ch,\dotsc\}$. Let $a<b$ be integers, and
let $f_t\colon[2^{2^a},2^{2^b}]\to\R$ be a family of continuous monotonic
functions parameterized by~$t\in T\subset\R^n$. Assume that the
family~$f_t$ is represented as a $B$-circuit---i.e.\ there is a $B$-circuit $f$ that takes inputs $t$ and~$x$ and computes~$f_t(x)$.
Moreover, assume that $f_t(2^{2^a})f_t(2^{2^b})<0$ for all~$t\in T$.
Then there is a $B$-circuit~$g$ representing a function from~$\R^n$ to~$\R$
mapping any~$t\in T$ to a power of~$2$ such that
\[
	f_t(g(t)) f_t(2g(t)) \le 0
\]
and $\size{g}\le p(\abs{a}+\abs{b}+\|f\|)$ for some fixed polynomial~$p$.
\end{lemma}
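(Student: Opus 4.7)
The strategy is bisection on the integer exponent $k$ of the output $g(t) = 2^k$. Since $k$ ranges over the exponentially large set $[\lceil 2^a\rceil, 2^b - 1]$, enumeration is hopeless, but bisection cuts the search to $O(b)$ rounds, each of which will add only one copy of the circuit $f$ to $g$. First I would precompute the doubling constants $D_j = 2^{2^j}$ for $j = 0, \dotsc, b$ by iterated squaring ($D_0 = 2$, $D_j = D_{j-1}^2$), using $O(b)$ gates; these provide both the domain endpoint $D_b = 2^{2^b}$ and the step sizes needed to modify an exponent by a power of $2$. I would also compute a clamp value $C \eqdef D_{\max(a,0)}$ of circuit size $O(|a|)$, which lies in the domain of $f_t$ and approximates $2^{2^a}$ from above.

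Next, I would work with the artificial extension $\phi_t(k) \eqdef f_t(\max(2^k, C))$, defined for all real $k \in [0, 2^b]$. On the original domain $\phi_t$ coincides with $k \mapsto f_t(2^k)$, and on the initial stretch $[0, \log_2 C]$ it is constant equal to $f_t(C)$, which is nonzero since $f_t(2^{2^a})$ is. In particular $\phi_t(0)\phi_t(2^b) < 0$, so standard bisection on $[0, 2^b]$ applies; moreover, the constancy of $\phi_t$ on $[0, \log_2 C]$ forces any output $k_L$ with $\phi_t(k_L)\phi_t(k_L+1) \le 0$ to satisfy $k_L \ge \log_2 C$, guaranteeing $g(t) = 2^{k_L}$ falls within the domain (a slight adjustment of the initialization handles the case $a < 0$).

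I would then unroll $b$ bisection rounds, maintaining state $(y_L, v_L)$ representing $y_L = 2^{k_L}$ and $v_L = \phi_t(k_L)$, initialized as $y_L = 1$, $v_L = f_t(C)$. At round $j$, descending from $b$ to $1$, I compute $y_M = y_L \cdot D_{j-1}$ (so that $y_M = 2^{k_L + 2^{j-1}}$), clamp $y_M^\ast = \ch(y_M - C, C, C, y_M)$, evaluate $v_M = f_t(y_M^\ast)$ with one fresh copy of the circuit $f$, and update via $\ch$ gates as $y_L \leftarrow \ch(v_L v_M, y_L, y_L, y_M)$ and $v_L \leftarrow \ch(v_L v_M, v_L, v_L, v_M)$: if $v_L v_M > 0$ the sign change lies in the upper half and we advance the left endpoint to $k_M$, otherwise we leave it alone. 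After round $j = 1$ the range has length $1$, and the output is $y_L$.

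The main obstacle is keeping the circuit polynomially sized. A naive recursive implementation of bisection would insert two copies of the ``remaining'' circuit at each round, one per branch of a sign test, giving total size $2^{\Theta(b)}$. The key trick is to branch only on the state, using one $\ch$ gate per state variable to select between the two possible updates, so that each round contributes exactly one copy of $f$ plus $O(1)$ auxiliary gates, for a total size of $O(b \cdot \|f\| + |a| + |b|)$, which is polynomial in $|a| + |b| + \|f\|$ as required.
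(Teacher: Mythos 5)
Your proof takes essentially the same route as the paper's own (very terse) sketch: binary search on the exponent $e_t$, unrolled into a circuit. You supply the one genuinely nontrivial detail that the paper elides, namely that a recursive implementation would double the circuit at every sign test, and that the fix is to unroll the loop so that each of the $O(b)$ rounds contributes exactly one copy of~$f$ plus a bounded number of $\ch$~gates selecting between the two candidate state updates; together with the iterated-squaring precomputation of the $D_j$, this yields the polynomial bound. One caveat you gloss over: for $a<0$ the assertion $\phi_t(0)\phi_t(2^b)<0$ does not follow from the hypotheses, because the clamp value $C=2$ can lie beyond the sign change of~$f_t$ inside $(2^{2^a},2)$, and in that case no power of~$2$ in the stated domain satisfies the conclusion at all. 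The ``slight adjustment'' you mention should really be to extend the bisection to negative exponents, precomputing $2^{-2^j}$ alongside $2^{2^j}$; this is in fact what the paper's own application demands (there the left endpoint is of the form $2^{-2^M}$, which is not $2^{2^a}$ for any integer~$a$), so the issue lies in the lemma's parametrization rather than in your bisection machinery.
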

\begin{proof}
The lemma says that we can find the order of magnitude of the
solution~$x$ of the equation~$f_t(x)=0$ uniformly in~$t$
through a circuit of size polynomial in~$\abs{a}+\abs{b}+\size{f}$. It is
easy to devise a bisecting procedure that finds the binary digits of an
integer~$e_t$ in such a way that
\[
	f_t(2^{e_t}) f_t(2^{e_t+1}) \le 0
\]
and to turn such procedure into a $B$-circuit of the required size.
\end{proof}

\begin{proof}[Proof of Proposition~\ref{th-polynomial-roots}]
For $d=1$ the result follows immediately observing that $r_d(x,y)=-x/y$.
Hence, it suffices to produce a polynomial time algorithm that, for~$d>1$, given a
$B_d$-circuit~$c$, computes a $B_{d-1}$-circuit~$\tilde c$ such that
$\tilde c$ is positive if and only if $c$ is positive.
First we use
Lemma~\ref{th-regularization} to get a regular circuit~$\bar{c}=c$. Our plan,
now, is to approximate the $\r_d$~gates of~$\bar{c}$ with suitable
$B_{d-1}$-circuits, in
such a way that the errors introduced do not alter the sign of any gate
in~$\bar{c}$, and ultimately of $\bar{c}$~itself. We will henceforth
produce a new $B_{d-1}$-circuit~$\tilde{c}$ obtained from~$\bar{c}$
through the replacement of $\r_d$~gates by $B_{d-1}$-circuits based on the
Newton's approximation method. We need $\bar{c}$ to be regular because our
replacement circuits work just for regular gates,
which is enough by Observation~\ref{th-taint}.

Let $N^1_d(x_0,a_0\dotsc a_d)$ be a $B_{d-1}$-circuit representing one
iteration of Newton's method applied to the polynomial~$p(x)\eqdef\sum_i
a_ix^i$, i.e.
\[
N^1_d(x_0,a_0\dotsc a_d) = x_0 - \frac{\sum_i a_ix_0^i}{\sum_i a_iix_0^{i-1}}
\]
For every positive integer~$k$, build a $B_{d-1}$-circuit $N^k_d(x_0,a_0\dotsc a_d)$ representing the $k$-th
iterate of Newton's method applied to~$p$ starting from~$x_0$ through the
iteration
\[
N^{k+1}_d(x_0,a_0\dotsc a_d) = N^k_d(N^1_d(x_0,a_0\dotsc a_d),a_0\dotsc
a_d)
\]
it is clear that, for fixed~$d$, the size of~$N^k_d$ is linear in~$k$.
Let $\zeta$ be the largest real root of~$p$.
We will produce a $B_{d-1}$-circuit~$A_d(a_0\dotsc a_d)$ that, assuming
that $\r_d(a_0\dotsc a_d)$ is regular,
evaluates to an approximate zero of~$p$ associated to~$\zeta$, moreover
we will arrange that the additional
condition~$\abs{A_d(a_0\dotsc a_d)}\le1+2^{2^{C_d\size{\bar c}}}$
holds.
Assuming that we have~$A_d$, we build~$\tilde c$
substituting each~$\r_d(a_0\dotsc a_d)$ with
\[
\tilde\r_d(a_0\dotsc a_d)\eqdef
N^{(E_d+C_d+1)(\size{\bar{c}}+K)}_d(A_d(a_0\dotsc a_d),a_0\dotsc a_d)
\]
for a suitable positive integer~$K$ depending only on~$d$,
which we will describe later in the proof.
At each regular $\r_d$~gate, our substitution introduces an error
bounded by
\[
\epsilon_K\eqdef2^{-2^{E_d(\size{\bar{c}}+K)}}
\]
Therefore, irrespective of the value of~$K\ge1$,
this substitution satisfies the hypothesis of
Lemma~\ref{th-perturbations} with~$e=0.5$, and, by
Lemma~\ref{th-heights-bound}, we have that $\tilde c - 2^{-2^{C_d\size{\bar
c}}-1}$ is positive if and only if $\bar c$ is positive, hence the
statement.

The last remaining step is the construction of~$A_d$. The circuit~$A_d$
that we are going to build will depend on the size of~$\bar c$, i.e.\ it
is going to work only in our particular situation, and not
necessarily for any value of its inputs~$a_0\dotsc a_d$. To
follow the proof, it is convenient to consider the construction of~$\tilde
c$ as a stepwise process in which $\r_d$~gates are replaced in evaluation
order. At a given step we are going to replace~$\r_d(a_0\dotsc a_d)$
with~$\tilde\r_d(\tilde a_0\dotsc \tilde a_d)$, where
$\tilde a_0\dotsc\tilde a_d$ are $B_{d-1}$-circuits obtained
from~$a_0\dotsc a_d$ during the preceding steps of the construction,
which, by induction, we can assume to yield precision~$\epsilon_K$ at each
$\r_d$~gate.
Remind that if $\r_d(a_0\dotsc a_d)$ is not regular, then, by
Observation~\ref{th-taint}, there is nothing to
prove, therefore we can assume this gate to be regular. Let $\tilde p$
denote~$\sum_i \tilde a_ix^i$. The derivatives of $p$
at~$\zeta$ can be expressed as
$B_d$-circuits with inputs~$a_0\dotsc a_d$ of size bounded by some~$k$
depending only on~$d$---let $\r_d^i(a_0\dotsc a_d)$ denote the circuit for the
$i$-th derivative of~$p$ at~$\zeta$.
Therefore, for each~$i\le d=\operatorname{deg}(p)$,
Lemma~\ref{th-heights-bound} gives us
\[
2^{-2^{C_d(\size{\bar{c}}+k)}} < \abs{p^{(i)}(\zeta)}
< 2^{2^{C_d(\size{\bar{c}}+k)}}
\]
On the other hand, choosing $K>k$, we see that the derivatives of~$\tilde
p$ at its largest real root~$\tilde\zeta$ approximate the derivatives
of~$p$ at~$\zeta$ to an absolute error smaller than
\[
\frac{1}{2} 2^{-2^{C_d(\size{\bar{c}}+k)}}
\]
in fact, this bound follows applying Lemma~\ref{th-perturbations} to the
circuits obtained plugging $\tilde a_0\dotsc\tilde a_d$ into~$\r_d^i$.
Hence we have that, for each~$i\le d$
\[
2^{-2^{C_d(\size{\bar{c}}+k)}-1} < \abs{\tilde{p}^{(i)}(\tilde\zeta)}
< 2^{2^{C_d(\size{\bar{c}}+k)}+1} \tag{$\star$}
\]
and, in particular, none of the first $d$ derivatives of~$\tilde{p}$
vanishes at~$\tilde{\zeta}$.

By Tarski-Seidenberg, we can test the signs of the derivatives of~$\tilde
p$ at~$\tilde\zeta$ through a fixed Boolean combination of
polynomial conditions on the coefficients~$a_0\dotsc a_d$, and, in turn,
we can realize this Boolean combination as a switching network of
$\ch$~gates. Therefore we can build a circuit designed to decide its course
of action based on the signs of the derivatives of~$\tilde p$
at~$\tilde\zeta$.
From now on, we assume that the first and second derivatives of~$\tilde p$
at~$\tilde\zeta$ are positive: the reader can easily work out the three
other cases.

Our goal, now, is to find $a$, $b$, and~$c$, with $c$ represented by a
$B_{d-1}$-circuit, satisfying the hypothesis of
Lemma~\ref{th-approx-zero}: this is enough to conclude. First we use
$\r_{d-1}$ gates to write the roots of the first, second, and third
derivatives of~$\tilde p$. We call~$S$ the set of all these values plus~$\pm
(1+2^{2^{C_d\size{\bar c}}})$, which are an upper and lower bound
for~$\tilde\zeta$---to obtain this bound, similarly as we did for the derivatives,
we can apply the inductive hypothesis and
Lemma~\ref{th-perturbations} to $\r_d(\tilde a_0\dotsc\tilde a_d)$,
we deduce that $\tilde\zeta$ is very close to~$\zeta$, which, in turn, is less
than~$2^{2^{C_d\size{\bar c}}}$ in absolute value by Lemma~\ref{th-heights-bound}.
Now,
using choice gates, we find two consecutive elements $s$ and~$t$ of~$S$
such that~$s\le\zeta\le t$.
The polynomial~$\tilde{p}$ and its second
derivative~$\tilde{p}''$ are
monotonic in the interval~$[s,t]$, hence we can apply
Lemma~\ref{th-bisection} to the
composition~$p\circ p''^{-1}$ -- condition~\ref{cond-bisection} following from~($\star$) -- and get a new
interval~$[a,d]\subset[s,t]$ such that $\zeta\in[a,d]$ and there is an~$e$ such that
\[
2^{e} \le \tilde{p}''(x) \le 2^{e+1}
\]
for all~$x\in[a,d]$.

Now, in order to find $b$ and~$c$, we turn our attention to the order of magnitude of~$\tilde{\zeta}-a$.
First we use a choice gate to test whether~$\tilde\zeta$ happens to be
within~$\epsilon_K$ from~$a$. If this is the case, then we let $A_d$ just
output~$a+\epsilon_K$---the reader may notice that, strictly speaking, in
this case the output of~$A_d$
may not be an approximate zero, nevertheless it is already as close
to~$\tilde\zeta$ as we need the final result of the Newton's iterations to
be, so, for the purpose of our algorithm, no harm is done.
If $\tilde\zeta\ge a+\epsilon_K$, then
\[
2^{-2^{E_d(\size{\bar c}+K)}} \le \tilde\zeta -a \le 1+2^{2^{C_d\size{\bar
c}}}
\]
i.e.\ the order of magnitude of~$\tilde\zeta -a$ is bounded in a range of
size linear in~$\size{\bar c}$.
Again using Lemma~\ref{th-bisection}, we can find $b'$ and~$c'$ such
that $a<b'\le\tilde\zeta\le c'$ and $c'-a = 2(b'-a)$. Choosing either $b=b'$
and~$c=(b'+c')/2$, or $b=(b'+c')/2$ and~$c=c'$, by means of two last
choice gates, we get $b$ and~$c$ such that
$c-b \le (b-a)/2$ and $b\le\tilde\zeta\le c$. This concludes the proof.
\end{proof}

\section{Reducing \mathshift\times\mathshift\ to an Arbitrary Semialgebraic Function} \label{sect-times-to-funct}

\noindent In this section we address the opposite problem of
Section~\ref{sect-polynomial-roots}, namely we want to recover~$\times$
starting from a regular not piecewise linear function~$f\colon\R\to\R$.
The argument is comparatively technically easier. The idea is to observe
that the product can be simulated using linear operations and the square
function. In turn, the square can be approximated, in some sense, zooming
in a point on the graph of~$f$, because we can expect that, under strong
magnification, $f$ should be practically indistinguishable from its second
order approximation. The zero bound, in this direction, is trivial,
because the input circuit computes an integer value.
Technical obstacles lie in the fact that we can use
just~$+$ and~$-$, as opposed to all linear functions, and in balancing
the quality of our approximation with the size of the resulting circuit.

\begin{lemma}\label{th-times-to-poly}
Let $p\in\R[x]$ be a polynomial of degree $d>1$, and let $\alpha$ be a positive real.
Then $\posslp$ is polynomial time many-one reducible
to~$\posslp(0,\alpha,+,-,p)$.
\end{lemma}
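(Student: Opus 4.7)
The plan is to realize multiplication \emph{exactly} in the new basis, bypassing the error analysis entirely. Since $\deg p = d > 1$, the second derivative $p''$ is a nonzero polynomial, so some integer multiple $x_0 \eqdef n_0\alpha$ of $\alpha$ satisfies $p''(x_0) \neq 0$; after replacing $p$ by $0 - p$ if necessary, I may assume $p''(x_0) > 0$.

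Pick $d + 1$ distinct integer nodes, say $I = \{0, 1, \ldots, d\}$. By invertibility of the Vandermonde matrix on $I$, there are rationals $U_i$ with $\sum_{i \in I} U_i\, i^s = \delta_{s, 1}$ for each $s \in \{0, 1, \ldots, d\}$; clearing denominators, I may take $U_i \in \Z$ and $\sum_i U_i\, i^s = D\,\delta_{s,1}$ for some positive integer $D$. Set
\[
\tilde M(u, v) \eqdef \sum_{i, j \in I} U_i\, U_j\, p(x_0 + iu + jv),
\]
a closed-form $\{0,\alpha,+,-,p\}$-expression of size depending only on $d$ (integer coefficients $iu$, $jv$ are realized by iterated addition). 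Expanding each $p(x_0 + iu + jv)$ by the multinomial theorem and collecting by $u^s v^t$, the coefficient of $u^s v^t$ in $\tilde M$ factors through $(\sum_i U_i\, i^s)(\sum_j U_j\, j^t) = D^2 \delta_{s,1}\delta_{t,1}$; hence only the $uv$-monomial survives, and a short calculation gives $\tilde M(u, v) = \lambda\, uv$ exactly, with $\lambda \eqdef D^2\,p''(x_0) > 0$.

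Given a $\posslp$-circuit $c$, I first balance it by standard ``$\times 1$'' padding on the shorter input of each addition gate, so that the multiplicative depth $\mu(g)$ of each gate is well-defined along all leaf-to-gate paths; this incurs only a polynomial blow-up. I then precompute the scaling constants $f(\mu) \eqdef \lambda^{2^\mu - 1}\alpha^{2^\mu}$ for $\mu \leq \mu_{\max}$ via $f(0) = \alpha$ and $f(\mu) = \tilde M(f(\mu-1), f(\mu-1))$, adding $O(\mu_{\max})$ gates. Finally I translate the balanced circuit: $0 \mapsto 0$, each $1$ sitting at depth $\mu \mapsto f(\mu)$, $+/- \mapsto +/-$, $\times \mapsto \tilde M$, and every ``$\times 1$'' padding gate becoming $\tilde M(\,\cdot\,, f(\mu))$. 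By induction on $\mu(g)$ one verifies that the translated gate evaluates to $f(\mu(g))\,v(g)$, where $v(g) \in \Z$ is the value at the corresponding gate of $c$. Since $f(\mu) > 0$ for every $\mu$, the output $c'$ has the same sign as $c$, and the whole construction is polynomial in the size of $c$, giving the required many-one reduction.

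The main conceptual ingredient is the tensor-product Lagrange formula for $\tilde M$: because the resulting identity is exact, no error analysis is needed and the only technical worry---multiplication by non-integer scalars, singled out in the preamble of the section---disappears, since all the Lagrange weights are integers. The alternative ``strong magnification'' heuristic, which approximates $p$ by its second-order Taylor polynomial at $x_0$, would force us into delicate error bounds and quantitative control of the scaling; it is this latter, heavier technique that becomes unavoidable in the next step, when we have to reduce $\times$ to an arbitrary continuous semialgebraic function instead of a polynomial.
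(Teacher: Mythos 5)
Your proof is correct, and it follows a genuinely different route from the paper's. Where the paper first collapses $p$ to a pure quadratic $ax^2$ by applying the iterated forward difference $\Delta_\alpha^{d-2}$ followed by $q(2x)-2q(x)+q(0)$, and then extracts $2axy = p(x+y)-p(x)-p(y)$ by polarization, you go straight for the bilinear form: the tensor-product Lagrange/Vandermonde combination $\tilde M(u,v)=\sum_{i,j}U_iU_j\,p(x_0+iu+jv)$ annihilates every monomial $u^sv^t$ except $uv$, yielding $\lambda uv$ with $\lambda = D^2p''(x_0)>0$ in one shot. Your computation of the surviving coefficient is right (the Vandermonde identities $\sum_i U_i i^s = D\delta_{s,1}$ are needed only for $s\le d$, which suffices since $p^{(s+t)}(x_0)=0$ for $s+t>d$), $x_0=n_0\alpha$ with $p''(x_0)\neq0$ exists since $\alpha>0$ and $p''\not\equiv 0$, and all coefficients $U_iU_j$, $i$, $j$, $n_0$ are integers realized by iterated addition within the basis $\{0,\alpha,+,-,p\}$. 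Your rescaling invariant $g\mapsto f(\mu(g))v(g)$ with $f(\mu)=\lambda^{2^\mu-1}\alpha^{2^\mu}$ matches the paper's $(2a)^{2^d-1}\alpha^{2^d}$ scaling in spirit, but you key it on \emph{multiplicative} depth rather than total depth, which lets you dispense with the paper's reduction to a circuit with a single $1$-gate. One small slip: you say you pad only ``the shorter input of each addition gate,'' but $\times$-gates whose two inputs have unequal multiplicative depth also need $\times1$ padding (e.g.\ $(x\times x)\times y$); this is an easy fix and doesn't affect the polynomial size bound. What your approach buys is a shorter path to an exact bilinear identity; what the paper's buys is a reusable normal form ($ax^2$) that it then reuses essentially verbatim in the proof of Proposition~\ref{th-root-to-square}, where the target function is not a polynomial and the finite-difference reduction to second-order behaviour is unavoidable — a point your closing paragraph correctly anticipates.
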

\begin{proof}
We will produce a polynomial
time procedure that, given a closed circuit~$c$ with gates in the
basis~$B\eqdef\{0,1,+,-,\times\}$, generates a closed circuit~$\tilde c$ with gates
in~$B'\eqdef\{0,\alpha,+,-,p\}$, in such a way that $c$ is positive if and
only if $\tilde c$ is positive.

First we prove that without loss of generality we can assume $p(x)=ax^2$
for some real coefficient~$a>0$. Let $q$ be the polynomial
\[
q(x) = \Delta_\alpha^{d-2}[p](x)
\]
where $\Delta_\alpha^{d-2}$ is the $(d-2)$-th iterate of the
first difference operator
\[
\Delta_\alpha[f](x) = f(x+\alpha) - f(x)
\]
Clearly $q$ can be implemented with a fixed number, depending on~$d$, of
$B'$-gates, and the degree of~$q$ is~$2$. Now, the polynomial
\[ q(2x) - 2q(x) + q(0) \]
has the required form, except at most for the sign of~$a$, which is easily
corrected.

Assuming $p(x)=ax^2$, we can construct a $B'$-circuit computing
\[
2axy = p(x+y) - p(x) - p(y)
\]
therefore, from now on, we can replace $B'$ with
\[\{0,\alpha,+,-,(x,y)\mapsto2axy\}\]
We will need the following observation, that, for each~$n$, we can construct a
closed $B'$-circuit $k_n=(2a)^{2^n-1}\alpha^{2^n}$ of size linear in~$n$.
In fact, it suffices to let $k_0=\alpha$ and~$k_{i+1}=2ak_ik_i$, where in
the latter a $2axy$~gate is applied to a single instance of~$k_i$ as $x$
and~$y$.

Now, we construct an intermediate $B$-circuit $c'$ in such a way that
$c'=c$, the circuit $c'$ has only one $1$~gate, and all the paths from any given gate to
the output gate have the same length---this can be accomplished adding
to~$c$ no more than~$\size{c}^2$ dummy gates arranged in $0+0+\dotsb$
subcircuits.
We may assume the depth of a non-constant gate of~$c'$ to be its distance
from the $1$~gate---in fact, if the $1$~gate is not above some gate~$g$,
then the value of~$g$ is necessarily~$0$, and we can replace $g$ with the
constant~$0$.
The circuit~$c'$ is our template for the construction of~$\tilde
c$---the $0$~gates of~$c'$ correspond in~$\tilde c$ to $0$~gates, the only
$1$~gate corresponds to an $\alpha$~gate, for every $xy$
gate in $c'$ we place a corresponding $2axy$ gate in $\tilde
c$, and for every $x\pm y$
gate, occurring, say, at depth~$d$, we place a $2ak_{d-1}(x\pm y)$ subcircuit. 
It is easy to show, by induction on the depth, that the value of a gate at
depth~$d$ of $c'$ multiplied by  $(2a)^{2^{d}-1}\alpha^{2^{d}}$ gives
the value of the
corresponding object in~$\tilde c$. Therefore $\tilde c$ and~$c=c'$ have the same sign.
\end{proof}

\begin{proposition}\label{th-root-to-square}
Let $B$ be the
basis~$\{0,+,-,f,\dotsc\}$ where $f$ denotes a unary function
and the dots -- $\dotsc$ -- indicate additional functions.
Assume that for some $\alpha,\beta>0$ we have
\[
\abs{f(x)-\alpha x^2}\le \abs{x}^3
\]
for all $x\in[-\beta,\beta]$, and assume that $V(B)$ is dense in~$\R$.
Then $\posslp$ is polynomial time many-one reducible to~$\posslp(B)$.
\end{proposition}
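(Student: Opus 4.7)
The plan is to reduce $\posslp$ in two stages: first to $\posslp(\{0,\alpha',+,-,p\})$ via Lemma~\ref{th-times-to-poly}, with $p(x)=\alpha' x^{2}$ and a carefully chosen positive~$\alpha'$; and then from that basis into $\posslp(B)$ by a gate-by-gate simulation. The only non-trivial primitives to simulate are the constant $\alpha'$ and the polynomial~$p$, since $0$, $+$, and $-$ already lie in~$B$.

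The core observation is the following. Whenever $\delta>0$ is small and $\abs{x},\abs{x+\delta},\delta\le\beta$, the hypothesis on $f$ yields
\[
f(x+\delta) - f(x) - f(\delta) = \alpha[(x+\delta)^{2} - x^{2} - \delta^{2}] + O(\abs{x}^{3}+\delta^{3}) = 2\alpha\delta\, x + O(\abs{x}^{3}+\delta^{3}),
\]
so wrapping one more~$f$ around this expression gives a $B$-circuit
\[
\hat p(x) := f(f(x+\delta)-f(x)-f(\delta))
\]
that approximates $\alpha\cdot(2\alpha\delta x)^{2}=4\alpha^{3}\delta^{2}x^{2}$. I set $\alpha':=4\alpha^{3}\delta^{2}$, use $\hat p$ as the simulation of $p(x)=\alpha'x^{2}$, and simulate the constant gate~$\alpha'$ by the same $\hat p$ evaluated on a close approximation of~$1$, producible by a constant-size fixed $B$-circuit thanks to the density of~$V(B)$.

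The parameter $\delta$ must be doubly-exponentially small in $\size{c}$ yet realisable by a $B$-circuit of polynomial size. I obtain it by iterating $f$ starting from a seed $v$ fixed once and for all, with $0<v<\min(\beta,1/\alpha)$, which exists in $V(B)$ by density and is producible in constant size: since $f$ approximately squares and rescales by~$\alpha$, the $m$-fold iterate $f^{(m)}(v)$ has magnitude of order $\alpha^{2^{m}-1}v^{2^{m}}\le 2^{-2^{\Omega(m)}}$, reachable by an $O(m)$-size circuit. Choosing $m$ linear in $\size{c}$ delivers a $\delta$ of the required magnitude.

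The main obstacle is the error analysis. Two things need to be verified: first, that all arguments fed into $f$ in the simulation lie safely inside $[-\beta,\beta]$, so that the quadratic approximation bound is actually in force; and second, that the accumulated additive error on the final output stays below the $\posslp$ zero bound of the original input~$c$, which is doubly-exponentially small in $\size{c}$. For the first point, inspecting the construction in the proof of Lemma~\ref{th-times-to-poly} shows that the internal scaling factors $k_{d}=(2\alpha')^{2^{d}-1}{\alpha'}^{2^{d}}$ used there shrink all gate values of the intermediate circuit~$\tilde c$ to magnitudes far below~$\beta$ once $\alpha'$ is taken doubly-exponentially small. For the second, a depth-induction on~$\tilde c$ in the spirit of Lemma~\ref{th-perturbations}, exploiting the cubic error term $O(\abs{x}^{3}+\delta^{3})$ at each $\hat p$-invocation, gives the desired bound provided $\delta$ is chosen sufficiently small.
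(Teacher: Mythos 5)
Your plan shares the paper's first step (invoke Lemma~\ref{th-times-to-poly} to reduce to a squaring basis), but the second step --- a gate-by-gate replacement of $p(x)=\alpha'x^2$ by the gadget $\hat p(x)=f(f(x+\delta)-f(x)-f(\delta))$ with one globally fixed $\delta$ --- cannot be made to work, and the failure is structural, not a matter of choosing $\delta$ small enough. Write $E(y)=f(y)-\alpha y^2$, so $\abs{E(y)}\le\abs{y}^3$ but $E$ is otherwise uncontrolled (in particular it is not assumed continuous, let alone Lipschitz). The inner expression is $g(x)=2\alpha\delta x+E(x+\delta)-E(x)-E(\delta)$, and the perturbation $E(x+\delta)-E(x)-E(\delta)$ is bounded only by $\Theta(\delta^3)$ \emph{uniformly in small $x$}: it need not vanish as $x\to0$, since we have no modulus of continuity for $E$. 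After the outer $f$, the resulting $\hat p(x)$ approximates $4\alpha^3\delta^2x^2$ only up to an \emph{additive error floor} of order $\alpha\cdot\Theta(\delta^3)^2=\Theta(\delta^6)$, independent of $x$. But in the circuit produced by Lemma~\ref{th-times-to-poly} with quadratic coefficient $a\sim\delta^2$, the scaling factor at depth $d$ is $(2a)^{2^d-1}{\alpha'}^{2^d}\sim\delta^{4\cdot2^d-2}$, so the true output value at depth $D$ has magnitude $\approx\delta^{4\cdot2^D-2}$, which is already $\ll\delta^6$ once $D\ge2$. The per-gate error floor thus dominates the signal by a doubly-exponential margin, and since both quantities are powers of $\delta$ (exponent~$6$ for the error, exponent~$\Theta(2^D)$ for the signal), no choice of $\delta$ --- however doubly-exponentially small --- can repair this.

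The paper's proof sidesteps this by refusing to fix a single $\delta$. Its recursion $a(n,u,e)$ evaluates the $\{0,1,+,-,\cdot^2\}$-circuit with a \emph{variable} scale $u$ that is replaced by $f(u)\approx\alpha u^2$ at each squaring gate, so that the scale shrinks doubly-exponentially in lockstep with the gate values, and it threads an explicit error budget $e$ that is debited at each gate by the new cubic error $O((\abs m+1)^3u^3)$ it introduces --- an amount that is small \emph{relative to the new scale} $f(u)\sim u^2$. The output is then an approximation of $nu_n$ with absolute error below $u_n$, and since $n=2c-1$ is an odd integer the sign survives. To rescue your gadget you would need $\delta$ (and hence the quadratic coefficient $\alpha'$) to vary with depth, at which point the two-stage structure collapses: Lemma~\ref{th-times-to-poly} no longer applies as a black box, and you are forced into essentially the paper's integrated recursion.
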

\begin{proof}
By Lemma~\ref{th-times-to-poly} suffices to show
a polynomial time procedure that, given a circuit~$c$ with
gates in the basis $B'\eqdef\{0,1,+,-,\cdot^2\}$,
generates a $B$-circuit~$\tilde c$ in such a way
that $c$ is positive if and only if $\tilde c$ is positive.

Without loss of generality we may assume $2\beta\le\alpha$.
First we describe a procedure that, given positive numbers
$e\le\alpha$ and~$u$, with $u$ represented as a $B$-circuit, and given a
$B'$-circuit~$n$, attempts to produce a
$B$-circuit $a(n,u,e)$ such that
\begin{equation}
\abs{a(n,u,e)-nu}<\frac{eu}{\alpha} \tag{$\star$}
\end{equation}
the intuition is that
$a(n,u,e)$ represents an $e/\alpha$-approximation of~$n$ scaled
down by a factor of~$u$. This procedure may either succeed in its goal or
fail explicitly.

The circuit~$a(n,u,e)$ is defined inductively on the structure
of~$n$
\begin{align*}
a(m^2,f(u),e) &=
f(a(m,u,\frac{e-4(|m|+1)^3u}{4|m|+2}))
\intertext{\hfill fail if $e\le4(|m|+1)^3u$ or $\beta<(|m|+1)u$}
a(m_1\pm m_2,u,e) &= a(m_1,u,\frac{e}{2})\pm a(m_2,u,\frac{e}{2}) \\
a(1,u,e) &= u \\
a(0,u,e) &= 0
\end{align*}
the procedure fails in any other case. It is easy to check by direct
computation that
property~($\star$) is preserved---for the first case, the computation goes
as follows. Assume $(|m|+1)u\le\beta$ and remember that $e\le\alpha$ and
$m$ is an integer. Let
\[\epsilon = \frac{e-4(|m|+1)^3u}{4|m|+2}\]
then clearly $0<\epsilon\le\alpha$, and we have
\begin{multline*}
\abs{f(a(m,u,\epsilon))-m^2f(u)} \le\\
\begin{split}
&{\le\;}\phantom{+\;}\abs{f(a(m,u,\epsilon))-\alpha(a(m,u,\epsilon))^2}\\
&\phantom{\le\;}{+\;}\abs{\alpha(a(m,u,\epsilon))^2 - \alpha m^2u^2}\\
&\phantom{\le\;}{+\;}\abs{\alpha m^2u^2 - m^2f(u)} \\
&{\le\;} (|m|+1)^3u^3+\epsilon(2|m|+1)u^2+|m|^2u^3 \\
&{<\;} \frac{eu^2}{2} \le \frac{ef(u)}\alpha
\end{split}
\end{multline*}
hence property~($\star$) is preserved.

Now we find $u_n$ such that $a(n,u_n,\alpha)$ does not fail.
To this aim, we use the density hypothesis to pick a $B$-circuit~$k$ such that
\[
0< k\le \operatorname{min}(\frac{1}{3\alpha}\,,\frac{\beta}{2})
\]
Observe that $0< f^i(k) \le \beta 2^{-2^i}$ for all~$i$, this can be shown
proving by induction that $0<f^i(k)\le k 2^{1-2^i}$.
We claim that
$u_n\eqdef f^{3\size{n}+3}(k)$ does the job.
In fact, at each step~$a(n',u',e')$
of the recursion, the following conditions are met
\begin{align*}
u' &\le \beta 2^{-2^{3\size{n}+3-d}}\\
e' &\ge \alpha 2^{-d2^{\size{n}+3}}
\end{align*}
where $d$ is the depth at which the recursion step occurs---the first one
follows from the bound on~$f^i(k)$, the second can be shown by induction
using the first plus~$n'<2^{2^{\size{n}}}$. From this conditions,
straightforward computation shows that the procedure does not fail.

For the special case of computing a circuit representation of
$a(n)\eqdef a(n,u_n,\alpha)$, we argue that a variant
of our procedure can be carried out in polynomial time. First, we
already know that the procedure will not fail, hence we can omit to
maintain the value of~$e$, since this quantity is used uniquely to check
for failure.
Then, notice that the choice of~$k$ does not
depend on the circuit~$n$ (but just on the basis~$B$), therefore we can simply pick a valid~$k$ and hard-code it into the procedure.
Hence, since the only values of~$u$ that we encounter
during the performance of the procedure are of the form~$f^{i+\size{n}+4}(k)$
with $0\le i<\size{n}$,
we can construct $B$-circuits to represent these values
in time polynomial in~$\size{n}$. Now, stipulate that every
time we need to compute~$a(n',u',-)$ for a subcircuit~$n'$ of~$n$ and
a~$u'$ in our list, we check if the same computation has already been
performed, and if so we simply link to the already constructed subcircuit.
Since there are $\size{n}$ possible subcircuits~$n'$, and $u'$ ranges over a set
of~$\size{n}$ different values, our modified procedure makes at most
$\size{n}^2$ recursive calls.

Finally, property~($\star$) yields
\[
\abs{\frac{a(2c-1)}{u_{2c-1}} - (2c-1)}< 1
\]
and, observing that $2c-1$ necessarily represents an odd integer,
this implies that $\tilde c\eqdef a(2c-1)$
is positive if and only if $c$ is positive.
\end{proof}

\section{Proof of the Main Results}

\begin{proof}[Proof of Theorem~\ref{th-main-a}]
It is easy to see, for instance by the Finiteness
Theorem~\cite[Theorem~2.4.1]{PreDel}, that each function in~$B$ is
piecewise algebraic, with finitely many semialgebraic pieces.
Hence, there is a degree~$d$ such that each function in~$B$
can be written as a $B_d$-circuit, where~$B_d$ denotes the language
defined at the beginning of Section~\ref{sect-polynomial-roots}. Therefore
$\posslp(B)$ is reducible to~$\posslp(B_d)$, which, in turn, is reducible
to~$\posslp$ by Proposition~\ref{th-polynomial-roots}.
\end{proof}

\begin{proof}[Proof of Theorem~\ref{th-main-b}]
First assume that there is function~$g\colon\R^n\to\R$ in~$B$ that
is not piecewise linear. By o-minimality (see e.g.\ \cite[Chapter~8
Exercises~3.3]{VanDenDries}), $g$ is of class~$\mathrm{C}^2$ in an open
subset~$O$ of~$\R^n$ of codimension~$<n$. Since $O$ is dense, $g|_O$ can
not be linear, hence we can pick a point~$x\in O\cap V(B)$
such that the Hessian matrix~$H(g)(x)$ of~$g$ at~$x$ does not vanish.
Now we pick an integer vector~$v\in\Z^n$ such that $v^{\textrm{\sc
T}}H(g)(x_0)v$ does not vanish. Clearly
\begin{align*}
	f\colon\R&\to\R\\
	t&\mapsto 2g(x+tv) - g(x+2tv) - g(x)
\end{align*}
can be represented by a $B$-circuit, and it is of class~$\mathrm{C}^2$
at~$0$. It can be verified by direct computation that $f(0)=f'(0)=0$
and~$f''(0)\neq 0$. Hence either $f$ or~$-f$ satisfies the hypothesis of
Proposition~\ref{th-root-to-square}. Therefore we have the first case.

Now assume that all functions in~$B$ are piecewise linear, albeit possibly
with algebraic coefficients. We will show how to evaluate $B$-circuits in
polynomial time. Fix a number field~$K$ in which all the coefficients of
all the linear pieces of functions in~$B$ reside. Fix~$e_1\dotsc e_n\in K$
that generate~$K$ as a vector space over~$\Q$. Clearly $V(B)\subset K$,
hence, in the evaluation of $B$-circuits, we can restrict the domain of
our computation to~$K$. We represent each element~$x$ of~$K$ using the
unique vector~$\sigma(x)\in\Q^n$ such that~$x=\sum_i \sigma(x)_ie_i$. Now,
for each function~$f\in B$ we need to know how to
compute~$\sigma(f(x_1\dotsc x_m))$ given~$\sigma(x_1)\dotsc\sigma(x_m)$.
Let $g(x_1\dotsc x_m)$ be one of the linear pieces that constitute~$f$,
then $\sigma\circ g\circ\sigma^{-1}$ is a linear map from~$(\Q^n)^m$
to~$\Q^n$, therefore we decide to compute the linear pieces
constituting~$f$ simply by matrix multiplication in the rationals, kept as
pairs of coprime integers, which in turn are kept in binary.
To choose among the pieces, by
Tarski-Seidenberg, suffices to evaluate a fixed (depending on~$f$) set of
rational polynomial conditions on the coefficients
of~$\sigma(x_1)\dotsc\sigma(x_m)$, which we decide to do again by simple
rational arithmetic. Finally, to decide~$x>0$ given~$\sigma(x)$, we employ
similarly Tarski-Seidenberg to translate this condition into a Boolean
combination of polynomial conditions on~$\sigma(x)_1\dotsc\sigma(x)_n$, and
we evaluate it using rational arithmetic. Summarizing, we precompute the
coefficients for the finite number of rational linear functions and
polynomials that we will need, this data depends only on~$B$, which is
fixed. Then we carry out the evaluation as described above. It is easy to
check that the algorithm works in polynomial time. This concludes the
proof of the second case.
\end{proof}

\section{Addenda}
\label{sect-addenda}

\noindent In this last section we collect a few additional results, observations, questions.

First we would like to discuss
the density and continuity assumptions in
Theorem~\ref{th-main-b}.
If all the functions in~$B$ happen to be constants or unary
functions, we can dispense with the aforementioned assumptions because of
the following fact, which is a consequence of~\cite[Corollary~2.2]{Wil}
and Fact~\ref{th-poly-bound}.

\begin{fact}\label{th-discrete-poly}
Let $f\colon\R\to\R$ be a semialgebraic function. Assume that $f$ maps
integers to integers. Then there is~$N\in\R$ such that $f|_{[N,+\infty[}$
coincides with a polynomial.
\end{fact}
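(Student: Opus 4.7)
The plan is to combine the polynomial growth from Fact~\ref{th-poly-bound} with the integer-valuedness hypothesis through a finite-difference argument. First, I would apply Fact~\ref{th-poly-bound} to fix $d\in\N$ and $M>0$ with $|f(x)|\le x^d$ for all $x>M$. The function of interest is then the $(d+1)$-st iterated forward difference $g(x)\eqdef\sum_{k=0}^{d+1}(-1)^{d+1-k}\binom{d+1}{k}f(x+k)$, which is again semialgebraic and takes integer values on~$\Z$, since $f$ does.

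The crux is to show that $g(x)\to 0$ as $x\to+\infty$. Any semialgebraic function on a half-line $[N_0,+\infty[$ admits a Puiseux-type asymptotic expansion $f(x)=c_1 x^{r_1}+c_2 x^{r_2}+\dotsb$ with rationals $r_1>r_2>\dotsb$, and the polynomial-growth bound forces $r_1\le d$. The first-difference operator $\Delta$ sends a monomial $x^r$ to something asymptotic to $r x^{r-1}$ (or to~$0$ if $r$ is a non-negative integer and the cancellations align), so iterating $d+1$ times produces a function whose leading asymptotic exponent is strictly negative. This is exactly the content I would invoke from~\cite[Corollary~2.2]{Wil}, rather than redoing the Puiseux analysis in detail.

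Once $g$ is integer-valued and tends to zero, there is $N_1\in\Z$ such that $g(n)=0$ for every integer $n\ge N_1$. The vanishing of the $(d+1)$-st finite difference of $f$ at all consecutive integers from $N_1$ onward is equivalent to $f$ agreeing on that set with some polynomial $p$ of degree at most $d$. To promote this to agreement on a real interval, I would observe that the set $\{x\in\R\mid f(x)=p(x)\}$ is semialgebraic, hence a finite union of isolated points and intervals by tame topology; since it contains all integers $\ge N_1$, it must contain an unbounded interval $[N,+\infty[$, giving $f|_{[N,+\infty[}=p|_{[N,+\infty[}$.

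The main obstacle is the middle step: turning the coarse upper bound from Fact~\ref{th-poly-bound} into strict decay of the high-order finite difference. Fact~\ref{th-poly-bound} controls only the size of $f$ and would not by itself preclude bounded oscillations of order $\Theta(1)$ in~$g$; one genuinely needs the asymptotic rigidity of semialgebraic (or more generally o-minimal) functions at infinity, which is the substantive ingredient supplied by Wilkie's corollary, to force $g$ to go to zero and thus to make the integer-valuedness collapse the finite differences identically.
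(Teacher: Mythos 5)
Your finite-difference reconstruction is a valid route, whereas the paper offers no argument at all: Fact~\ref{th-discrete-poly} is stated as a direct consequence of~\cite[Corollary~2.2]{Wil} together with Fact~\ref{th-poly-bound}, and that single sentence is the entire proof. In all likelihood Wilkie's corollary already asserts the full conclusion (an integer-valued o-minimally definable function of polynomial growth is eventually a polynomial), so you are misattributing its role: you invoke it only for the decay of $g=\Delta^{d+1}f$, when it almost certainly supplies the whole theorem, making the rest of your derivation a re-proof rather than an application. That re-proof is nonetheless sound and instructive, and in the purely semialgebraic setting the decay of $g$ can be handled elementarily, bypassing Wilkie altogether: $f$ is eventually real-analytic, so $\Delta^{d+1}f(x)=f^{(d+1)}(\xi)$ for some $\xi$ between $x$ and $x+d+1$ by the mean-value form of finite differences; $f^{(d+1)}$ is again semialgebraic, hence eventually monotone with a limit in $[-\infty,+\infty]$, and a nonzero limit would after $d+1$ integrations force $|f(x)|$ to grow at least like $x^{d+1}$, contradicting Fact~\ref{th-poly-bound}. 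One small slip in the Puiseux discussion: $\Delta(x^r)\sim r x^{r-1}$ even when $r$ is a positive integer, and the annihilation of a polynomial term $x^r$ happens only after $r+1$ iterations of $\Delta$, not one; what you wrote conflates the two. Your concluding tame-topology step --- that a semialgebraic subset of $\R$ containing $\Z_{\ge N_1}$ has finitely many connected components and must therefore contain an unbounded interval --- is exactly right, and is the correct way to promote agreement on integers to agreement on a real half-line.
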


\begin{theorem}
Let $B=\{+,-,f_1\dotsc f_n\}$ be a finite set of functions semialgebraic over~$\Q$. Assume that $f_1\dotsc f_n$ are either constants or unary
functions. Then the following dichotomy holds:
$\posslp(B)$ is in~$\mathsf P$ if all the functions in~$B$ are piecewise linear when
restricted to~$V(B)$,
otherwise $\posslp$ and~$\posslp(B)$ are mutually
polynomial time Turing reducible.
\end{theorem}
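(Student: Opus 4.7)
The proof is a case analysis on $V(B)$. If $B$ contains no constants, no closed $B$-circuit exists and $\posslp(B)$ is trivially in $\mathsf P$; otherwise $V(B)$ is a nonempty subgroup of $(\R,+)$, hence either $\{0\}$, of the form $c\Z$ with $c>0$, or dense in~$\R$. The reduction $\posslp(B)\mored\posslp$ is immediate from Theorem~\ref{th-main-a}, and the case $V(B)=\{0\}$ is trivial. In the dense case we follow the strategy of Theorem~\ref{th-main-b}: the PL algorithm works by number-field arithmetic in the field~$K$ generated by the constants of~$B$ and the coefficients of the affine pieces of the $f_i$'s; in the non-PL direction, each $f_j$ is semialgebraic and unary (hence continuous off a finite set), and non-PL on $V(B)$ forces some continuity interval on which $f_j$ is not affine, whence $f_j''\ne 0$ somewhere on it, and density lets us pick $x\in V(B)$ in it so that $h(t)\eqdef 2f_j(x+t)-f_j(x+2t)-f_j(x)$ satisfies, up to sign, the hypothesis of Proposition~\ref{th-root-to-square}.

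The new case is $V(B)=c\Z$. Rescale so $c=1$; then $V(B)=\Z$ and each $f_i$ takes integers to integers. Fact~\ref{th-discrete-poly}, applied to $f_i$ and to $x\mapsto f_i(-x)$, produces an integer~$N$ and polynomials $p_i^\pm$ with $f_i(n)=p_i^+(n)$ for $n\ge N$ and $f_i(n)=p_i^-(n)$ for $n\le -N$. Since an integer-valued polynomial that is affine on an infinite subset of $\Z$ is affine identically, $f_i|_\Z$ is PL if and only if every $p_i^\pm$ is affine. In the PL subcase each $f_i|_\Z$ amounts to two integer affine formulas on the tails plus a constant-size lookup table, so $B$-circuits evaluate in polynomial time by binary integer arithmetic and $\posslp(B)\in\mathsf P$.

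If instead some $p_j^\pm$ has degree $d\ge 2$, then after possibly negating the input or the output of $f_j$ via $-$ gates we may assume that the offending polynomial is $p_j^+$ with positive leading coefficient $a_d$. Let $q(t)\eqdef\Delta_1^{d-2}[p_j^+](t)=At^2+Bt+C$, with $A=(d!/2)a_d>0$. This $q$ is computed by a constant-size $B$-subcircuit for any real $t\ge N$, since it invokes $p_j^+$ at $t,t+1,\dotsc,t+d-2$, all of which are $\ge N$. The identity $q(t+M)+q(M-t)-2q(M)=2At^2$, combined for $t\in\{x,y,x+y\}$, yields a $B$-subcircuit that computes $4Axy$ for all integers $x,y$ with $|x|,|y|\le(M-N)/2$, calling $f_j$ only on inputs~$\ge N$. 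This gadget replaces the $p$-gate in the proof of Lemma~\ref{th-times-to-poly}: a given $\posslp$-instance~$c$ is rescaled depth by depth into a $B$-circuit $\tilde c$ of the same sign, with the constant $\alpha=1$ already available in $V(B)$.

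The one genuine technical obligation is to supply the shift~$M$ as a $B$-subcircuit of polynomial size, with $M$ exceeding every integer value arising in the simulation---this is where I expect the bulk of the bookkeeping in the full write-up to land. By Lemma~\ref{th-heights-bound} applied to the $B_d$-expansion of $\tilde c$, these values are bounded by~$2^{2^{O(\size c)}}$. Iterating $M_{k+1}\eqdef f_j(M_k)$ from $M_0\eqdef N+1$, each step raises the order of magnitude to the power~$d$ (since $p_j^+(x)\sim a_d x^d$ as $x\to+\infty$), so $O(\size c)$ iterations produce an~$M$ of the required doubly-exponential size within a polynomial-size subcircuit. With this $M$ hardcoded and fed to every instance of the multiplication gadget, the rest of the construction of~$\tilde c$ is a direct transcription of Lemma~\ref{th-times-to-poly}, yielding $\posslp\mored\posslp(B)$ and completing the dichotomy.
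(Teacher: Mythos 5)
Your proposal is correct and covers all the cases, but for the central new case --- $V(B)$ discrete and some $f_j$ not piecewise linear on~$\Z$ --- you take a genuinely different route from the paper. Both proofs start from Fact~\ref{th-discrete-poly} to extract a non-affine polynomial tail, and both reduce to degree two by finite differences, but there the methods diverge. The paper constructs a \emph{constant-size} $B$-circuit $g$ that agrees with a non-affine polynomial $\tilde g$ on \emph{all of}~$\Z$: one kills the linear term via $h_1(2x)-2h_1(x)$, symmetrizes via $h_3(x)=h_2(x)+h_2(-x)$ (with a separate trick $h_2(h_2(x+1)-h_2(x-1))$ when the two tail coefficients cancel), and pre-composes with the \emph{argument-dependent} shift $M(2x+1)$, where $M=N+d-2$ is a constant depending only on $B$ and $|M(2x+1)|\ge M$ for every integer $x$. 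Lemma~\ref{th-times-to-poly} then applies black-box to $\tilde g$, and the reduction follows since closed $\{0,1,+,-,\tilde g\}$-circuits and the corresponding $\{0,1,+,-,g\}$-circuits take the same integer values. You instead use a constant \emph{additive} shift $M$, getting $2At^2 = q(t+M)+q(M-t)-2q(M)$ only for $|t|\le M-N$; this forces $M$ to dominate every rescaled value arising in the simulation, hence to be doubly exponential in $\size c$, and the gadget is not a drop-in replacement for the $p$-gate of Lemma~\ref{th-times-to-poly}: you must re-open that proof to verify the bounded-input restriction is preserved at every gate. This does go through --- the apparent circularity in choosing $M$ resolves because the bound on the rescaled values of $c'$ depends only on $\size c$ and the fixed constant $4A$, not on $\tilde c$ or $M$ --- but it carries an input-size dependence (and the attendant bookkeeping about the growth of the $M_k$ iteration, where one must also ensure $p_j^+$ is eventually increasing and $\ge x^2$ before starting) that the paper's construction avoids entirely, precisely because the multiplicative shift $M(2x+1)$ scales with the argument and makes the polynomial formula valid on all of $\Z$ at no cost.
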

\begin{proof}
Along the lines of the proof of Theorem~\ref{th-main-b}.
The only new case
to examine is when~$V(B)$ is discrete and there is a function~$f_i$ that
restricted to~$V(B)$ is not piecewise linear. Since $V(B)$ must be an
additive subgroup of~$\R$, because $+,-\in B$, we can assume that $V(B)$ is
precisely~$\Z$.
By Fact~\ref{th-discrete-poly}, we have that $f_i$ must
coincide with a non-affine polynomial over the integers of a half line.
Our goal, now, is to use $f_i$ in order to construct a $B$-circuit
representing a 
function~$g\colon\R\to\R$ that coincides
with a non-affine polynomial~$\tilde g$ on all of~$\Z$---as opposed to
just a half line.
Assuming that we succeed in this, then we can immediately conclude.
In fact,
the basis~$(0,1,+,-,\tilde g)$ satisfies the
hypothesis of Lemma~\ref{th-times-to-poly}, and testing $(0,1,+,-,\tilde
g)$-circuits is equivalent to testing $(0,1,+,-,g)$-circuits, since,
by our assumption that $V(B)=\Z$, any
closed $(0,1,+,-,\tilde g)$-circuit
has the same value of the $(0,1,+,-,g)$-circuit obtained replacing
$\tilde g$~gates with $g$~gates.

The function~$g$ is constructed as follows.
By Fact~\ref{th-discrete-poly}, the function~$f_i$ coincides with a polynomial~$p$
on~$[N,+\infty[$, and with another polynomial~$q$ on~$]-\infty,-N]$, for
some positive integer~$N$.
Let $d$ be the maximum of~$\operatorname{deg}(p)$
and~$\operatorname{deg}(q)$. Consider
\begin{align*}
h_1&\eqdef\Delta_1^{d-2}[f_i] & h_2(x)&\eqdef h_1(2x)-2h_1(x)
\end{align*}
As in the proof of Lemma~\ref{th-times-to-poly}, it is easy to see that
there are $a$, $b$, $a'$, and~$b'$, with at least one of~$a$ and~$a'$ not
null, such that
\begin{align*}
h_2(x) &= a x^2 + b &&\text{for $x\in{]}-\infty,-M{]}$} \\
h_2(x) &= a' x^2 + b' &&\text{for $x\in{[}M,+\infty{[}$}
\end{align*}
where $M=N+d-2$. If $a+a'\neq0$, then
\[
h_3(x) \eqdef h_2(x) + h_2(-x)
\]
coincides with the second degree polynomial
\[(a+a')x^2+(b+b')\]
for~$\abs x \ge M$.
Therefore, if~$a+a'\neq 0$, we can simply choose $g(x) = h_3(M(2x+1))$. It remains to
be considered the case~$a=-a'$. In this case we observe that, for
$\abs{x}\ge M+1$
\[
	h_2(x+1) - h_2(x-1) = 4a\abs{x}
\]
Therefore we can replace $h_2$ in the above argument with
\[
\bar{h}_2(x) \eqdef h_2(h_2(x+1)-h_2(x-1))
\]
which coincides with~$4a^2x^2+b'^{\text{-or-not}}$ for $\abs{x}$ sufficiently large.
\end{proof}

Unfortunately, we do not have an analogue of
Fact~\ref{th-discrete-poly} for multivariate functions. Moreover,
the following example shows that, failing either the continuity
or the density hypothesis, we can not employ the technique of reducing to
a unary function and invoke Lemma~\ref{th-times-to-poly}
or Proposition~\ref{th-root-to-square}. In particular $\posslp(B)$ may be equivalent to~$\posslp$
even though all the unary functions represented by $B$-circuits are either
constant on a cofinite set, or the identity function.

\begin{example}
Let
\begin{align*}
B &=\{c_1\dotsc c_n,+,-,\times\} \\
B' &=\{c_1\dotsc c_n,c_1^2\dotsc c_n^2,g_+,h_+,g_-,h_-,g_\times,h_\times\}
\end{align*}
where for $f\colon\R^2\to\R$ we define
\begin{align*}
g_f\colon\R^4&\to\R \\
{(x,y,z,t)} &\mapsto \begin{cases}
f(x,y) &\text{if $z=x^2$ and $t=y^2$} \\
0 &\text{otherwise}
\end{cases}\\
h_f\colon\R^4&\to\R \\
{(x,y,z,t)} &\mapsto \begin{cases}
{(f(x,y))^2} &\text{if $z=x^2$ and $t=y^2$} \\
0 &\text{otherwise}
\end{cases}
\end{align*}

Clearly $\posslp(B')$ is polynomial time Turing reducible
to~$\posslp(B)$. And the converse is also true. In fact, we can simulate 
any $B$-circuit~$c$ through a $B'$-circuit that keeps for each gate~$g$
of~$c$ a pair of gates, $g_1$ and~$g_2$, computing $c$ and~$c^2$
respectively.

On the other hand, for any function~$f\colon\R^4\to\R$ in~$B'$, and for any choice of four
linear functions $a_1\dotsc a_4\colon\R\to\R$, we see that
$f(a_1(x)\dotsc a_4(x))$, as a function of~$x$, is constant on a
cofinite set. It follows that any unary function represented by a
$B'$-circuit, unless it is the function~$x\mapsto x$, must be constant
outside of a finite set.
\end{example}

In opposition to continuity and density,
we are unsatisfied by the hypothesis that $B$ contains~$+$ and~$-$
in Theorem~\ref{th-main-b},
and would like to see it weakened or eliminated.
This hypothesis comes directly
from Proposition~\ref{th-root-to-square}, where we need $+$ and~$-$ to
manipulate the graph of~$f$.
It is conceivable that, as we can simulate~$\times$ killing the constant
and first degree terms of a second
order approximation of~$f$, we may be able to simulate some linear
function killing the constant and second degree terms, at least if $f$ is
generic enough.

The initial motivation of the present work
has been an ongoing attempt by the author and
Manuel Bodirsky to
investigate constraint satisfaction problems over the reals, continuing
the work initiated by~\cite{BJO12}. For technical reasons, due to the convexity
requirement proven in~\cite{BJO12}, it would be desirable to assess the
computational complexity of the problem of comparing two $B$-circuits (as
opposed to one $B$-circuit and~$0$) in a basis $B$ not containing the~$-$
function. For
the case of the basis~$\{0,1,+,-,\times\}$, it is an observation that the comparison of
circuits on the basis~$\{0,1,+,\times\}$ is polynomial time equivalent
to~$\posslp$. Nevertheless, we can not say whether the same holds
in a more general situation.

\section*{Acknowledgement}

\noindent The author is indebted to Manuel Bodirsky for pointing out the direction
that led to the present work, for numerous discussions and suggestions, and
for commenting on preliminary versions. We would also like to express our
gratitude to Eleonora Bardelli for interesting discussions and comments,
and to Vincenzo Mantova for suggesting references on Weil heights.
A number of improvements to this work would not have been possible without
valuable feedback provided by the anonymous reviewers of the
\textsc{csl-lics} joint meeting to be held in Vienna, July 14--18, 2014.



%


\begin{thebibliography}{ABKM09}
\providecommand{\url}[1]{#1}
\csname url@samestyle\endcsname
\providecommand{\newblock}{\relax}
\providecommand{\bibinfo}[2]{#2}
\providecommand{\BIBentrySTDinterwordspacing}{\spaceskip=0pt\relax}
\providecommand{\BIBentryALTinterwordstretchfactor}{4}
\providecommand{\BIBentryALTinterwordspacing}{\spaceskip=\fontdimen2\font plus
\BIBentryALTinterwordstretchfactor\fontdimen3\font minus
  \fontdimen4\font\relax}
\providecommand{\BIBforeignlanguage}[2]{{%
\expandafter\ifx\csname l@#1\endcsname\relax
\typeout{** WARNING: IEEEtranSA.bst: No hyphenation pattern has been}%
\typeout{** loaded for the language `#1'. Using the pattern for}%
\typeout{** the default language instead.}%
\else
\language=\csname l@#1\endcsname
\fi
#2}}
\providecommand{\BIBdecl}{\relax}
\BIBdecl

\bibitem[ABKM09]{ABKM08}
\BIBentryALTinterwordspacing
E.~Allender, P.~B{\"u}rgisser, J.~Kjeldgaard-Pedersen, and P.~B. Miltersen,
  ``On the complexity of numerical analysis,'' \emph{SIAM J. Comput.}, vol.~38,
  no.~5, pp. 1987--2006, 2008/09. [Online]. Available:
  \url{http://dx.doi.org/10.1137/070697926}
\BIBentrySTDinterwordspacing

\bibitem[Atk89]{Atkinson}
K.~E. Atkinson, \emph{An introduction to numerical analysis}, 2nd~ed.\hskip 1em
  plus 0.5em minus 0.4em\relax New York: John Wiley \& Sons Inc., 1989.

\bibitem[BC06]{BraCo06}
M.~Braverman and S.~Cook, ``Computing over the reals: foundations for
  scientific computing,'' \emph{Notices Amer. Math. Soc.}, vol.~53, no.~3, pp.
  318--329, 2006.

\bibitem[BCSS98]{BCSS}
L.~Blum, F.~Cucker, M.~Shub, and S.~Smale, \emph{Complexity and real
  computation}.\hskip 1em plus 0.5em minus 0.4em\relax New York:
  Springer-Verlag, 1998, with a foreword by Richard M. Karp.

\bibitem[BJO12]{BJO12}
\BIBentryALTinterwordspacing
M.~Bodirsky, P.~Jonsson, and T.~v. Oertzen, ``Essential convexity and
  complexity of semi-algebraic constraints,'' \emph{Log. Methods Comput. Sci.},
  vol.~8, no.~4, pp. 4:5, 25, 2012. [Online]. Available:
  \url{http://dx.doi.org/10.2168/LMCS-8(4:5)2012}
\BIBentrySTDinterwordspacing

\bibitem[Bra05]{Bra05}
M.~Braverman, ``On the complexity of real functions,'' in \emph{Foundations of
  Computer Science, 2005. FOCS 2005. 46th Annual IEEE Symposium on}, 2005, pp.
  155--164.

\bibitem[Ers81]{Ersh81}
A.~P. Ershov, ``Abstract computability on algebraic structures,'' in
  \emph{Algorithms in modern mathematics and computer science ({U}rgench,
  1979)}, ser. Lecture Notes in Comput. Sci.\hskip 1em plus 0.5em minus
  0.4em\relax Berlin: Springer, 1981, vol. 122, pp. 397--420.

\newpage

\bibitem[EY10]{EtYa10}
\BIBentryALTinterwordspacing
K.~Etessami and M.~Yannakakis, ``On the complexity of {N}ash equilibria and
  other fixed points,'' \emph{SIAM J. Comput.}, vol.~39, no.~6, pp. 2531--2597,
  2010. [Online]. Available: \url{http://dx.doi.org/10.1137/080720826}
\BIBentrySTDinterwordspacing

\bibitem[FM92]{FrieMan92}
\BIBentryALTinterwordspacing
H.~Friedman and R.~Mansfield, ``Algorithmic procedures,'' \emph{Trans. Amer.
  Math. Soc.}, vol. 332, no.~1, pp. 297--312, 1992. [Online]. Available:
  \url{http://dx.doi.org/10.2307/2154033}
\BIBentrySTDinterwordspacing

\bibitem[GGJ76]{GGJ76}
M.~R. Garey, R.~L. Graham, and D.~S. Johnson, ``Some {NP}-complete geometric
  problems,'' in \emph{Eighth {A}nnual {ACM} {S}ymposium on {T}heory of
  {C}omputing ({H}ershey, {P}a., 1976)}.\hskip 1em plus 0.5em minus 0.4em\relax
  Assoc. Comput. Mach., New York, 1976, pp. 10--22.

\bibitem[Goo94]{Good94}
\BIBentryALTinterwordspacing
J.~B. Goode, ``Accessible telephone directories,'' \emph{J. Symbolic Logic},
  vol.~59, no.~1, pp. 92--105, 1994. [Online]. Available:
  \url{http://dx.doi.org/10.2307/2275252}
\BIBentrySTDinterwordspacing

\bibitem[Koi93]{Kor93}
\BIBentryALTinterwordspacing
P.~Koiran, ``A weak version of the {B}lum, {S}hub \& {S}male model,'' in
  \emph{34th {A}nnual {S}ymposium on {F}oundations of {C}omputer {S}cience
  ({P}alo {A}lto, {CA}, 1993)}.\hskip 1em plus 0.5em minus 0.4em\relax Los
  Alamitos, CA: IEEE Comput. Soc. Press, 1993, pp. 486--495. [Online].
  Available: \url{http://dx.doi.org/10.1109/SFCS.1993.366838}
\BIBentrySTDinterwordspacing

\bibitem[Lan83]{Lang}
S.~Lang, \emph{Fundamentals of {D}iophantine geometry}.\hskip 1em plus 0.5em
  minus 0.4em\relax New York: Springer-Verlag, 1983.

\bibitem[LPY05]{LPY05}
\BIBentryALTinterwordspacing
C.~Li, S.~Pion, and C.~K. Yap, ``Recent progress in exact geometric
  computation,'' \emph{J. Log. Algebr. Program.}, vol.~64, no.~1, pp. 85--111,
  2005. [Online]. Available: \url{http://dx.doi.org/10.1016/j.jlap.2004.07.006}
\BIBentrySTDinterwordspacing

\bibitem[PD01]{PreDel}
A.~Prestel and C.~N. Delzell, \emph{Positive polynomials}, ser. Springer
  Monographs in Mathematics.\hskip 1em plus 0.5em minus 0.4em\relax Berlin:
  Springer-Verlag, 2001, from Hilbert's 17th problem to real algebra.

\bibitem[Poi95]{Poizat}
B.~Poizat, \emph{Les petits cailloux}, ser. Nur al-Mantiq wal-Ma\rasp rifah
  [Light of Logic and Knowledge], 3.\hskip 1em plus 0.5em minus 0.4em\relax
  Lyon: Al\'eas, 1995, une approche mod{\`e}le-th{\'e}orique de l'algorithmie.
  [A model-theoretic approach to algorithms].

\bibitem[Sil86]{Silverman}
J.~H. Silverman, \emph{The arithmetic of elliptic curves}, ser. Graduate Texts
  in Mathematics.\hskip 1em plus 0.5em minus 0.4em\relax New York:
  Springer-Verlag, 1986, vol. 106.

\bibitem[Tiw92]{Tiwa92}
\BIBentryALTinterwordspacing
P.~Tiwari, ``A problem that is easier to solve on the unit-cost algebraic
  {RAM},'' \emph{J. Complexity}, vol.~8, no.~4, pp. 393--397, 1992. [Online].
  Available: \url{http://dx.doi.org/10.1016/0885-064X(92)90003-T}
\BIBentrySTDinterwordspacing

\bibitem[TV08]{TarVya07}
\BIBentryALTinterwordspacing
S.~P. Tarasov and M.~N. Vyalyi, ``Semidefinite programming and arithmetic
  circuit evaluation,'' \emph{Discrete Appl. Math.}, vol. 156, no.~11, pp.
  2070--2078, 2008. [Online]. Available:
  \url{http://dx.doi.org/10.1016/j.dam.2007.04.023}
\BIBentrySTDinterwordspacing

\bibitem[vdD98]{VanDenDries}
\BIBentryALTinterwordspacing
L.~van~den Dries, \emph{Tame topology and o-minimal structures}, ser. London
  Mathematical Society Lecture Note Series.\hskip 1em plus 0.5em minus
  0.4em\relax Cambridge: Cambridge University Press, 1998, vol. 248. [Online].
  Available: \url{http://dx.doi.org/10.1017/CBO9780511525919}
\BIBentrySTDinterwordspacing

\bibitem[Wal00]{Waldschmidt}
M.~Waldschmidt, \emph{Diophantine approximation on linear algebraic groups},
  ser. Grundlehren der Mathematischen Wissenschaften [Fundamental Principles of
  Mathematical Sciences].\hskip 1em plus 0.5em minus 0.4em\relax Berlin:
  Springer-Verlag, 2000, vol. 326, transcendence properties of the exponential
  function in several variables.

\bibitem[Wil04]{Wil}
\BIBentryALTinterwordspacing
A.~J. Wilkie, ``Diophantine properties of sets definable in o-minimal
  structures,'' \emph{J. Symbolic Logic}, vol.~69, no.~3, pp. 851--861, 2004.
  [Online]. Available: \url{http://dx.doi.org/10.2178/jsl/1096901771}
\BIBentrySTDinterwordspacing

\end{thebibliography}

\providecommand{\etalchar}[1]{$^{#1}$}
\def\rasp{\leavevmode\raise.45ex\hbox{$\rhook$}}

\end{document}